\documentclass[preprint,12pt]{elsarticle}
\usepackage{bbm}




\usepackage{amssymb}
\usepackage{mathrsfs}
\usepackage{graphicx}
\usepackage{amsmath, amssymb, amsthm}
\usepackage{leftidx}
\usepackage{extarrows}
\usepackage{stfloats}
\usepackage{picinpar}
\usepackage{enumerate}
\usepackage{algorithm}
\usepackage{algorithmicx}
\usepackage{algpseudocode}
\usepackage{epsfig}
\usepackage{latexsym}
\usepackage{amsfonts}
\usepackage{enumerate}
\usepackage{graphics}
\usepackage{MnSymbol}
\usepackage{float}
\usepackage{pict2e}
\usepackage{tikz}
\usepackage{graphicx}
\usepackage{amsmath,amssymb,amsthm}
\usepackage{leftidx}
\usepackage{extarrows}
\usepackage{colortbl}
\usepackage{algorithm}
\usepackage{algorithmicx}
\newtheorem{thm}{Theorem}

\newtheorem{lem}{Lemma}
\newtheorem{rmk}{Remark}
\newtheorem{defn}{Definition}
\newtheorem{example}{Example}






\journal{}

\begin{document}

\begin{frontmatter}



\title{Improved transformation between Fibonacci FSRs and Galois FSRs based on semi-tensor product\tnoteref{label1}}

\author[1]{Bowen Li}\ead{qfhxjy@126.com}
\author[2]{Shiyong Zhu}\ead{zhusy0904@gmail.com}
\author[3]{Jianquan Lu\corref{cor1}}\ead{jqluma@seu.edu.cn}
\address[1]{School of Information Science and Engineering, Southeast University, Nanjing 210096, China}
\address[2]{Department of Systems Science, School of Mathematics, Southeast University, Nanjing 210096, China. 
}
\address[3]{Department of Systems Science, School of Mathematics, Southeast University, Nanjing 210096, China. 
}

\tnotetext[label1]{This work was supported by the National Natural Science Foundation of China under Grant No. 61573102, the Natural Science Foundation of Jiangsu Province of China under Grant BK20170019, the Jiangsu Provincial Key Laboratory of Networked Collective Intelligence under Grant No. BM2017002, Jiangsu Province Six Talent Peaks Project under Grant 2015-ZNDW-002, the Fundamental Research Funds for the Central Universities under Grant No. 2242019k1G013, and Postgraduate Research $\&$ Practice Innovation Program of Jiangsu Province KYCX19$\_$0111.}
\cortext[cor1]{Corresponding author.}
\begin{abstract}
Feedback shift registers (FSRs), which have two configurations: Fibonacci and Galois, are a primitive building block in stream ciphers. In this paper, an improved transformation is proposed between Fibonacci FSRs and Galois FSRs. In the previous results, the number of stages is identical when constructing the equivalent FSRs. In this paper, there is no requirement to keep the number of stages equal for two equivalent FSRs here. More precisely, it is verified that an equivalent Galois FSR with fewer stages cannot be found for a Fibonacci FSR, but the converse is not true. Furthermore, a given Fibonacci FSR with $n$ stages is proved to have a total of $\left(2^{n-1}\right)!^{2}-1$ equivalent Galois FSRs. In order to reduce the propagation time and memory, an effective algorithm is developed to find equivalent Galois FSR and is proved to own minimal operators and stages. Finally, the feasibility of our proposed strategies, to mutually transform Fibonacci FSRs and Galois FSRs, is demonstrated by numerical examples.
\end{abstract}

\begin{keyword}
Feedback shift registers \sep  Boolean networks \sep  semi-tensor product.


\end{keyword}

\end{frontmatter}


\section{Introduction}
Pseudo-random sequences are deterministic sequences with certain random properties, and have a wide range of applications, including but not limited to, detection, encryption, scrambling and spreading. In digital circuits, the common pseudo-random sequences generators are: Feedback shift registers (FSRs), filter generators, combination generators and binary machines. Due to FSRs' conceptual simplicity and effect applications, the investigations on FSRs have attracted considerable attention of researchers. An FSR is composed of a clock, updated functions and $n$ registers, which are also called stages. More precisely, each register, denoted by $x_{i}$, $i=1,2,\cdots,n$, only has the fundamental binary states: $1$ and $0$, where $x_{1}$ and $x_{n}$ are respectively called the lowest and highest order registers. At every clock cycle, the state update of register $x_{i}$ depends on the corresponding update function $f_{i}$, which is composed of some logical operators such as $\vee$ and $\wedge$. The state of one FSR at time instant
$t$, denoted by $X(t)=\left(x_{1}(t), x_{2}(t),\cdots, x_{n}(t)\right)$, is computed by the values of all registers at the last time instant (i.e., $X(t-1)$) and the corresponding update functions $f_i$. In the previous literature, algebraic normal form (ANF) is a common representation for Boolean functions, while addressing the analysis and synthesis of FSRs. In particular, for a Boolean function $f: \{0, 1\}^{n}\rightarrow \{0, 1\}$, its ANF is essentially a polynomial in Galois fields of order $(2)$ (GF$(2)$) as $f(x_{1},\cdots, x_{n})=\sum_{i=1}^{2^{n}}c_{i}\cdot x_{1}^{i_{1}}\cdot x_{2}^{i_{2}}\cdots x_{n}^{i_{n}}$. Thereinto, $c_{i}$ takes value from $\{0, 1\}$, and $(i_{1} i_{2}\cdots i_{n})$ is the binary expansion of $i$ with $i_{1}$ being the least significant bit. Based on the ANF of logical functions, the corresponding polynomial form of FSRs can be further obtained. Thereby, many problems of FSRs were investigated, including but not limited to, irreducibility \cite{Jiang2018TITp3944}, equivalent transformation \cite{Dubrova2009ITITp5263}, decomposition \cite{Zhang2015ITITp645}, as well as attack \cite{Zadeh2014IETp188}.

In general, according to the implementing configurations, FSRs can be divided into two types: the Fibonacci and the Galois, as shown in Fig. \ref{Tf1} \cite{Dubrova2009ITITp5263}. The first one is called a Fibonacci FSR and is conceptually more simple. In Fibonacci FSRs, the registers are chained to each other, then the state of register $x_{i}$, $i\in [2, n]$, is transmitted to the next register $x_{i-1}$, except for register $x_{1}$. The unique feedback function exists to update the state of the $n$-th register, namely $x_n$. While in Galois FSRs, each register has its own feedback function rather than the chain connection form. For both types of FSRs, the value of the lowest register acts as the output of the whole FSRs. Therefore, for a Fibonacci FSR, it is clearly noticed that the period of it state trajectory equals to that of its corresponding output sequence. It is helpful to construct Fibonacci FSRs when the period of output sequences satisfies some special characters. For a Galois FSR, the period of its output sequence is not necessary to be equal to that of states trajectory, but must be a divisor of the period of the corresponding state trajectory. From the application point of view, the depth of circuits used in the updated functions of Galois FSRs is potentially smaller than that of Fibonacci FSRs \cite{Dubrova2009ITITp5263}. Hence, these two kinds of FSRs have own disadvantages and advantages in the practical applications, and it motivates us to study the transformation between Fibonacci FSRs and Galois FSRs.
\begin{figure}[H]
\centering
\includegraphics[width=0.5\textwidth]{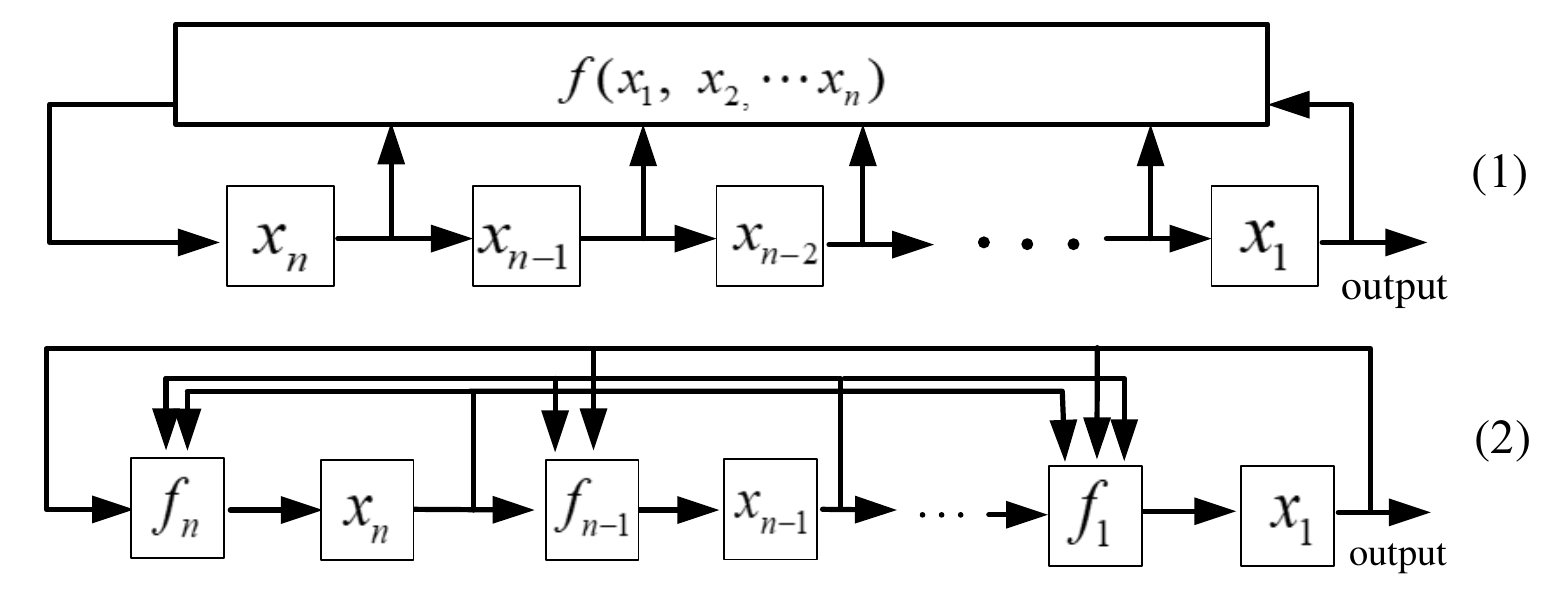}
\caption{The implementing configurations of two types of FSRs: (1) Fibonacci FSRs; (2) Galois FSRs \cite{Dubrova2009ITITp5263}.}
\label{Tf1}
\end{figure}

The transformation between two types of FSRs is to construct a Galois FSR and find the initial state which matches a given initial state of the given Fibonacci FSR to generate same outputs sequences, and vice versa. In \cite{Dubrova2009ITITp5263}, Dubrova proposed a special form called uniform form. Based on this special form and ANF, Dubrova argued that for a given Fibonacci FSR with uniform form, if the Galois FSRs obtained from this Fibonacci FSR are also uniform, then these two FSRs are equivalent. Furthermore, Dubrova in \cite{Dubrova2010ITITp2961} found the matching initial states for the equivalent of these two FSRs with uniform form. Unfortunately, there still exist a large number of FSRs, which are not uniform. Therefore, it is natural to raise such a question: For a Galois FSR which does not satisfy the uniform conditions, whether the corresponding equivalent Fibonacci FSR exists?

To answer the question, some works have been done. In \cite{Liu2013p335}, although the uniform conditions are not satisfied, it requires that the update functions only take input from lower stages than the stages they update. Later on, in \cite{Dubrova2014p187}, Dubrova proposed a novel transformation, which can be applied to arbitrary Galois FSRs rather than just the uniform FSRs, but some constraints on the update functions were still necessary to realize this transformation. Very recently, in order to further relax the constraints, Lu \emph{et al}. firstly utilized the Boolean network-based method to investigate the transformation between these two types of FSRs in \cite{Lu2018AUTp393}, where an FSR is regarded as a Boolean networks (BNs). Although the equivalent conditions were relaxed slightly, the FSRs transformed from Fibonacci FSRs can not be guaranteed to be Galois FSRs.

In this paper, we further attempt to use the BN-based method to investigate this transformation. BNs are a kind of logical systems, thus they were first proposed to model gene regulatory networks \cite{KauffmanS1969Jotbp437}. Similar to FSRs, the state of each node in a BN is also binary: $1$ or $0$. Moreover, each node has its own update function, which is also a Boolean function. At each time instant, the state of each node is updated by the corresponding update function. Recently, to deal with this discrete-time and also discrete-state system, the semi-tensor product (STP) of matrices was proposed by Cheng \emph{et al.} \cite{ChengD2010pX}. STP of matrices breaks the rule of traditional matrices multiplication. That is amount to say that, under the framework of STP, matrix $A$ with dimension $n\times m$ can multiply with matrix $B$ with dimension $p\times q$, where $m\neq p$. By STP, a discrete-time logical system precesses its corresponding algebraic state space representation rather than directly convert into the ANF. Inspired by the convenience of STP, some remarkable results on BNs were obtained, including but not limited to, controllability \cite{ChengD2009Ap1659,Zhong2019AMCp51}, observability \cite{Zhang2014cccp6854}, stability and stabilization \cite{Huang2020TNNLSp,xu2019TCp,Meng2017TACp4222,Li2017SIAMp3437,Huang2020isp205}, output tracking \cite{Li2016isp1,Zhu2019isp96}, block decoupling \cite{Yu2019tacp3129} and optimal
control \cite{WuTACp262}.

Recently, BNs are used to model FSRs in \cite{Lu2017SCISp,Zhong2018ITITp6429}. From the analytical point of view, the ANF can not explicitly reflect the relations between update functions and state transition. Relative to the ANF, under the framework of STP, Zhong \emph{et al.} in \cite{Zhong2015JCSSp783} first revealed the relation between update functions and state transition for Fibonacci FSRs, and it is helpful to construct the equivalent FSRs. In this paper, main contributions are stressed into the following points:
\begin{itemize}
\item{} For a given Fibonacci FSR with $n$ stages, we first develop an approach to construct the equivalent Galois FSRs with $n$ stages. Actually, in many existing results including \cite{Dubrova2009ITITp5263,Dubrova2010ITITp2961,Dubrova2014p187,Lu2018AUTp393}, the stage number of these two equivalent FSRs were coincident. However, according to the definition of transformation, only the output sequences are necessary to be identical, the number of stages is not. Furthermore, it is well-known that the power consumption and latency will increase as the number of registers increases. A natural question is that whether there exists an equivalent Galois FSR with fewer stages than the given Fibonacci FSR but generating the same output sequences. We prove that for Fibonacci FSRs, there does not exist an equivalent Galois FSR with fewer stages, but the converse is not true.
\item{} According to the approach proposed in this paper, a total of $(2^{n-1})!^{2}-1$ equivalent Galois FSRs can be constructed. As shown in Table \ref{T16}, logical operators always take up memory and increase propagation time. Therefore, while designing FSRs, it is desirable for us to use as fewer logical operators as possible to achieve satisfactory aims. As usual, we only focus on the transformation, and ignore how to optimize the number of logical operators. In this paper, we develop an algorithm to select the equivalent Galois FSR with the minimal operators and stages.
\item{} For arbitrary given Fibonacci FSR with $n$ stages, there must exist equivalent Galois FSRs with $n$ stages. However for the transformation, from Galois FSRs with $n$ stages to Fibonacci FSRs, it is not necessary to guarantee the same number of stages. A criterion is derived to determine whether there exists an equivalent Fibonacci FSR with same stages. If true, we continue to construct the equivalent Fibonacci FSRs with stages fewer than $n$. Otherwise, we further verify whether there exists equivalent Fibonacci FSRs with the minimal number of stages greater than $n$.
\end{itemize}

This paper is organized as follows: Section \ref{s2} presents some notations and basic definitions. In Section \ref{s3}, the transformation problem between Fibonacci and Galois FSRs is investigated. Moreover, two algorithms are respectively designed to reduce the number of logical operators and stages as much as possible. The last section concludes this paper.

\section{Preliminaries}\label{s2}
First of all, some necessary notations are introduced to simplify the presentation of the main content.
\begin{itemize}
  \item $[a,b]:=\{a,a+1,\cdots,b\}$ with $a$ and $b$ being positive integers;
  \item $\mathscr{D}:=\{0, 1\}$;
  \item $\text{Col}_i(A)$ is the $i$-th column of matrix $A$;
  \item $\delta_{n}^{i}:=\text{Col}_i(I_n)$, where $I_n$ represents the identity matrix with dimension $n$;
  \item $\Delta_{n}:=\{\delta_{n}^{1}, \delta_{n}^{2},\cdots, \delta_{n}^{n}\}$;
  \item Matrix $A$ is called a logical matrix if $\text{Col}_i(A)\subseteq\Delta_{m}$, $i\in[1,n]$;
  \item $\mathscr{L}_{m\times n}$ represents the set of all $m\times n$ logical matrices;
  \item Logical matrix $[\delta_{n}^{i_{1}}~\delta_{n}^{i_{2}}\cdots \delta_{n}^{i_{m}}]\in \mathscr {L}_{m\times n}$ is abbreviated to $\delta_{n}[i_{1}~i_{2}~\cdots~i_{m}]$ for easy expression.
      \item Assume that $S=\{\delta_{n}^{i_{1}},~\delta_{n}^{i_{2}},\cdots,\delta_{n}^{i_{r}}\}$, let $[S]$ be the set of $\{i_{1},~i_{2},\cdots, i_{r}\}$.
\end{itemize}

In a Fibonacci FSR, the state value of register $x_{i}$ is moved to the next register $x_{i-1}, i\in [2, n]$, except for register $x_{1}$. It claims that the corresponding update function $f_{i}$ can be presented as $f_{j}=x_{j+1}, j\in [1, n-1]$. Particularly, the state of register $n$ at the next time instant depends on the states of certain registers taking from the set $\{x_{1}, x_{2},\cdots, x_{n}\}$. Besides, the update function $f_{n}$ is given in the form as $f_{n}= f(x_{1}, x_{2},\cdots, x_{n})$, which is generally called the feedback function of Fibonacci FSRs. However, for a Galois FSR, each register has its own feedback function. Therefore, the Fibonacci FSR with $n$ registers reads:
\begin{equation}\label{T1}
\left\{\begin{aligned}
x_{1}(t+1)=&x_{2}(t),\\
\vdots&\\
x_{n-1}(t+1)=&x_{n}(t),\\
x_{n}(t+1)=&f(x_{1}, x_{2},\cdots, x_{n}),\\
\end{aligned}\right.
\end{equation}
and the Galois FSR can be described as follows:
\begin{equation}\label{T2}
\left\{\begin{aligned}
z_{1}(t+1)=&f_{1}(z_{1}, z_{2},\cdots, z_{n}),\\
\vdots&\\
z_{n-1}(t+1)=&f_{n-1}(z_{1}, z_{2},\cdots, z_{n}),\\
z_{n}(t+1)=&f_{n}(z_{1}, z_{2},\cdots, z_{n}).\\
\end{aligned}\right.
\end{equation}
Thereinto, the state of the $i$-th register for Galois FSRs is presented as $z_{i}$, in order to avoid confusion with that of Fibonacci FSRs. Moreover, states $x_{1}(t)$ and $z_{1}(t)$ act as the outputs of Fibonacci FSRs and Galois FSRs at time instant $t$, respectively. Observing the implementing configurations of Fibonacci FSRs, it is noticed that the state transition is quite special. More precisely, for a given state $(x_{1}(t), x_{2}(t),\cdots, x_{n}(t))$, the next state of the whole register can be calculated as $(x_{2}(t), x_{3}(t), \cdots, f(x_{1}, x_{2},\cdots, x_{n}))$, which is called the successor of state $(x_{1}(t), x_{2}(t),\cdots, x_{n}(t))$.

It can be observed from Fig. \ref{Tf1} that these two FSRs have different structures. From the theoretical and practical points of views, they have own advantages and disadvantages. Thus, it is interesting and important for us to investigate the mutual transformation between these two kinds of FSRs.
\begin{defn}\cite{Dubrova2009ITITp5263}\label{T6}
Two FSRs are said to be equivalent if their sets of output sequences are equal.
\end{defn}

Afterwards, an FSR is regarded as a BN, then the STP method is applied to further investigate the transformation between FSRs. Before presenting the algebraic state space representation, some necessary acknowledge about STP and BNs are briefly introduced.
\begin{defn}\cite{ChengD2010pX}
The STP of two matrices $A\in M_{m\times n}$ and $B\in M_{p\times q}$ is defined as
\begin{equation}
A\ltimes B=\left(A\otimes I_{\frac{l}{n}}\right)\left(B\otimes I_{\frac{l}{p}}\right),
\end{equation}
where `$\otimes$' is the Kronecker product and $l=lcm(n, p)$ is the least common multiple of $n$ and $p$.
\end{defn}
\begin{lem}\cite{ChengD2010pX}\label{cg1}
Considering a logical function $f(x_{1}, x_{2},\cdots,x_{n}): \mathscr{D}^{n}\rightarrow\mathscr{D}$, there exists a unique matrix $M_{f}\in \mathscr{L}_{2\times 2^{n}}$, named as the structure matrix of $f$, such that
\begin{equation*}
f(x_{1}, x_{2},\cdots,x_{n})=M_{f}\ltimes_{i=1}^{n}x_{i},
\end{equation*}
where $\ltimes_{i=1}^{n}x_{i}=x_{1}\ltimes x_{2}\cdots\ltimes x_{n}\in \Delta_{2^n}$. Please refer to \cite{ChengD2010pX} for more detailed computation process.
\end{lem}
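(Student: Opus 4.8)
The plan is to prove Lemma~\ref{cg1} by the standard vector identification of Boolean values with the columns of $I_2$, an explicit column-wise construction of $M_f$, and a one-line uniqueness argument. Throughout I would adopt the convention $1\sim\delta_2^1$, $0\sim\delta_2^2$, so that each argument $x_i$ is regarded as an element of $\Delta_2$ and $f$ becomes a map $\mathscr{D}^n\to\Delta_2$.

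The first key step is to show that $(x_1,\dots,x_n)\mapsto\ltimes_{i=1}^{n}x_i$ is a bijection from $\mathscr{D}^n$ onto $\Delta_{2^n}$. For column vectors the STP coincides with the Kronecker product, so $\ltimes_{i=1}^{n}x_i=x_1\otimes x_2\otimes\cdots\otimes x_n$; an easy induction on $n$ then gives $\delta_2^{i_1}\otimes\cdots\otimes\delta_2^{i_n}=\delta_{2^n}^{j}$, where $j\in[1,2^n]$ is the integer determined by the tuple $(i_1,\dots,i_n)$ (with $i_1$ the most significant digit). Hence as $(x_1,\dots,x_n)$ runs over the $2^n$ Boolean tuples, $\ltimes_{i=1}^{n}x_i$ runs exactly once over $\delta_{2^n}^1,\dots,\delta_{2^n}^{2^n}$.

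Given this bijection, I would define $M_f$ column by column: set $\text{Col}_j(M_f):=f(x_1,\dots,x_n)$, where $(x_1,\dots,x_n)$ is the unique tuple with $\ltimes_{i=1}^{n}x_i=\delta_{2^n}^{j}$. Since $f$ takes values in $\mathscr{D}\sim\Delta_2$, each column lies in $\Delta_2$, so $M_f\in\mathscr{L}_{2\times 2^n}$. The required identity is then immediate: for any tuple with $\ltimes_{i=1}^{n}x_i=\delta_{2^n}^{j}$, the right-hand side $M_f\ltimes_{i=1}^{n}x_i=M_f\delta_{2^n}^{j}$ is an ordinary matrix–vector product (the inner dimensions $2^n$ agree), equal to $\text{Col}_j(M_f)=f(x_1,\dots,x_n)$ by construction. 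For uniqueness, if $M$ and $M'$ both satisfy the identity, then evaluating at the tuple corresponding to $\delta_{2^n}^{j}$ forces $\text{Col}_j(M)=f(\cdots)=\text{Col}_j(M')$ for every $j\in[1,2^n]$, hence $M=M'$.

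This argument presents no genuine difficulty; the only points needing a little care are the bookkeeping that pins down which basis vector $\delta_{2^n}^{j}$ corresponds to a given input tuple, and the fact that the STP $M_f\ltimes_{i=1}^{n}x_i$ collapses to plain matrix multiplication precisely because the left factor has $2^n$ columns and the right factor $2^n$ rows. As an alternative I would note the recursive proof matching the flavour of the computations in \cite{ChengD2010pX}: induct on $n$, splitting $f$ into its cofactors $f(1,\cdot)$ and $f(0,\cdot)$, which have structure matrices $M_{f(1,\cdot)},M_{f(0,\cdot)}\in\mathscr{L}_{2\times 2^{n-1}}$ by the inductive hypothesis, and then checking that $M_f=[\,M_{f(1,\cdot)}\ \ M_{f(0,\cdot)}\,]$ works, since $x_1\ltimes y$ equals $y$ stacked over zeros when $x_1=\delta_2^1$ and zeros over $y$ when $x_1=\delta_2^2$.
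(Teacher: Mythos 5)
Your proof is correct: the identification of $\ltimes_{i=1}^{n}x_{i}$ with the Kronecker product $x_{1}\otimes\cdots\otimes x_{n}$, the resulting bijection onto $\Delta_{2^{n}}$, the column-by-column definition of $M_{f}$, and the evaluation-at-basis-vectors uniqueness argument are all sound, and this is precisely the standard proof given in the cited reference. The paper itself offers no proof of this lemma --- it is quoted from \cite{ChengD2010pX} as a preliminary --- so there is nothing to compare against beyond noting that your argument faithfully reconstructs the standard one.
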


As mentioned in the Introduction, a BN is essentially a logical system, where the state of each node takes value from $\mathscr{D}$ and its update function is a logical function \cite{KauffmanS1969Jotbp437,ChengD2010pX}. As usual, to convert a BN system into the conventional discrete-time linear system, two equivalent vector forms of Boolean variables are defined: $1\sim \delta_{2}^{1}$ and $0\sim \delta_{2}^{2}$. Henceforth, we consider the corresponding canonical vectors of Boolean variables. By Lemma \ref{cg1}, the componentwise form of each subsystem can be obtained. For example, in Fibonacci FSR (\ref{T1}), assume that the structure matrix of feedback function $f$ is given as $M_{f}=\delta_{2}[i_{1}~i_{2}\cdots~i_{2^{n}}]$, it claims that
$x_{n}(t+1)=M_{f}x(t)$ with $x(t)=\ltimes_{i=1}^{n}x_{i}(t)\in \Delta_{2^{n}}$. Furthermore, by resorting to the STP method, the algebraic form of Fibonacci FSR (\ref{T1}) can be obtained as follows:
\begin{equation}\label{T5}
x(t+1)=L_{f}\ltimes x(t),
\end{equation}
where $L_{f}\in \mathscr{L}_{2^{n}\times 2^{n}}$ is called the transition matrix. The relation between $L_{f}$ and $M_{f}$ has been revealed in \cite{Zhong2015JCSSp783}, specifically,
$$L_{f}=\delta_{2^{n}}[q_{1}~q_{2}\cdots~q_{2^{n}}],$$
with
\begin{equation}\label{T8}
\left\{\begin{aligned}
&q_{j}=2j-2+i_{j},\\
&q_{2^{n-1}+j}=2j-2+i_{2^{n-1}+j},
\end{aligned}\right.~j\in [1, 2^{n-1}].
\end{equation}
Or alternatively, equation (\ref{T8}) can be presented in the form of
\begin{equation}\label{T10}
\left\{\begin{aligned}
&(L_{f})_{j}=2j-2+i_{j},\\
&(L_{f})_{2^{n-1}+j}=2j-2+i_{2^{n-1}+j}.
\end{aligned}\right.
\end{equation}
Therefore, once the structure matrix of the update function is known, the transition matrix can be calculated, and state transition can also be determined.

\section{Main Results}\label{s3}
In the following sequel, we will utilize the BN-based approach to further investigate the transformation between Fibonacci and Galois FSRs. Let $z(t)=\ltimes_{i=1}^{n}z_{i}(t)$ be the state of Galois FSR (\ref{T2}) at time instant $t$. Followed by Definition \ref{T6}, a coordinate transformation $z(t)=Tx(t)$ need to be found such that $z_{1}(t)=x_{1}(t)$. Or alternatively, for arbitrary initial state $x(0)$, we can find the corresponding initial state $z(0)$ such that the generating output sequences are equivalent.

\begin{defn}\label{T30}\cite{limniotis2007TITp4293}
For any given binary sequence $S$ assumed by
$S=(a_{1}, a_{2}, a_{3},\\\cdots)$, if there exist
$n_{1}> 0$ and $n_{2}> 0$ such that
$a_{i}=a_{i+n_{1}}$ for all $i\geq n_{2}$, then
sequence $S$ is called ultimately periodic.
The least integers $n_{1}$ and $n_{2}$ with this
property are called period and preperiod of
sequence $S$, respectively. Additionally, if
$n_{2}=1$, then sequence $S$ is called
periodic.
\end{defn}
Since FSRs are usually applied into digital circuits which only have two basic elements: $0$ and $1$, then the output sequences are composed of $0$ and $1$ rather than the corresponding canonical vectors. Therefore, if the output sequence is $1~ 0~0~1$, it implies that the corresponding state sequence of register $x_{1}$ is $\delta_{2}^{1}~\delta_{2}^{2}~ \delta_{2}^{2}~ \delta_{2}^{1}$. Because the order of elements has four cases: $1\rightarrow 1$, $1\rightarrow 0$, $0\rightarrow 1$ and $0\rightarrow 1$, we can define the following sets:

{\footnotesize\begin{equation*}
\begin{aligned}
&\Omega_{1}=\left\{\delta_{2^{n}}^{i}: i\in [1, 2^{n-1}]\right\};\\
&\Omega_{0}=\left\{\delta_{2^{n}}^{i}: i\in [2^{n-1}+1, 2^{n}]\right\};\\
&\mathcal {S}_{1\rightarrow 1}=\left\{\left(\delta_{2^{n}}^{i}, \delta_{2^{n}}^{j}\right): [L_{f}]_{ji}=1~\&~ \lceil\frac{i}{2^{n-1}}\rceil=\lceil\frac{j}{2^{n-1}}\rceil=1\right\};\\
&\mathcal {S}_{1\rightarrow 0}=\left\{\left(\delta_{2^{n}}^{i}, \delta_{2^{n}}^{j}\right): [L_{f}]_{ji}=1~\&~\lceil\frac{i}{2^{n-1}}\rceil=1~\&~ \lceil\frac{j}{2^{n-1}}\rceil=2\right\};\\
&\mathcal {S}_{0\rightarrow 1}=\left\{\left(\delta_{2^{n}}^{i}, \delta_{2^{n}}^{j}\right): [L_{f}]_{ji}=1~\&~\lceil\frac{i}{2^{n-1}}\rceil=2~\&~ \lceil\frac{j}{2^{n-1}}\rceil=1\right\};\\
&\mathcal {S}_{0\rightarrow 0}=\left\{\left(\delta_{2^{n}}^{i}, \delta_{2^{n}}^{j}\right): [L_{f}]_{ji}=1~\&~\lceil\frac{i}{2^{n-1}}\rceil=2~\&~ \lceil\frac{j}{2^{n-1}}\rceil=2\right\}.\\
\end{aligned}
\end{equation*}}

For any given $\alpha, \beta\in \mathscr{D}$, if there exists binary pair $\left(\delta_{2^{n}}^{i}, \delta_{2^{n}}^{j}\right)\in \mathcal {S}_{\alpha\rightarrow \beta}$, then it implies that the corresponding two adjacent outputs are $\alpha\rightarrow \beta$. Moreover, $\delta_{2^{n}}^{j}$ is called the successor of $\delta_{2^{n}}^{i}$, and $\delta_{2^{n}}^{i}$ is said to be the predecessor. It follows from equations (\ref{T8}) or (\ref{T10}) that the cardinal number of set $\mathcal {S}_{\alpha\rightarrow \beta}, \alpha, \beta\in \mathscr{D}$ is fixed, namely $2^{n-2}$.

\subsection{From Fibonacci FSRs to Galois FSRs}
In this subsection, we are in position to construct the equivalent Galois FSRs with the matching initial states for a given Fibonacci FSR with arbitrary initial states. Before giving the construct approach, a lemma is first presented.

\begin{lem}\label{T25}
If a Fibonacci FSR can be transformed into an equivalent Galois FSR, let the coordinate transformation be $z(t)=Tx(t)$, then matrix $T$ must be nonsingular.
\end{lem}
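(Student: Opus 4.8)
The plan is to argue by contradiction: suppose $T$ is singular, and derive that two distinct states of the Fibonacci FSR would be forced to coincide under the Galois dynamics, contradicting the fact that the Fibonacci transition matrix $L_f$ is a permutation matrix (every column of $L_f$ is a distinct canonical vector, which follows directly from the explicit formula (\ref{T8}): the entries $q_j = 2j-2+i_j$ for $j\in[1,2^{n-1}]$ are the even/odd-shifted indices and are pairwise distinct, and likewise for the $q_{2^{n-1}+j}$, with the two blocks disjoint). First I would record that, since $z(t)=Tx(t)$ must hold for all $t$ and for every initial state, and since $x(t)$ ranges over all of $\Delta_{2^n}$, the matrix $T$ maps $\Delta_{2^n}$ into $\Delta_{2^n}$; hence $T$ is a logical matrix in $\mathscr{L}_{2^n\times 2^n}$, and $T$ is nonsingular if and only if $T$ is a permutation matrix, i.e. if and only if $T$ acts injectively on $\Delta_{2^n}$.

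Next I would suppose for contradiction that $T$ is not injective on $\Delta_{2^n}$, so there exist $\delta_{2^n}^{a}\neq \delta_{2^n}^{b}$ with $T\delta_{2^n}^{a}=T\delta_{2^n}^{b}$. Choose the initial states $x(0)=\delta_{2^n}^{a}$ and $x(0)=\delta_{2^n}^{b}$ for the Fibonacci FSR; the corresponding Galois initial states are equal, $z(0)=T\delta_{2^n}^{a}=T\delta_{2^n}^{b}$. Since the Galois FSR is a deterministic autonomous system, its entire trajectory — and in particular its whole output sequence $z_1(0), z_1(1), z_1(2), \ldots$ — is determined by $z(0)$. Therefore the two Fibonacci initial states $\delta_{2^n}^{a}$ and $\delta_{2^n}^{b}$ produce the same output sequence. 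But for a Fibonacci FSR the output sequence $x_1(0), x_1(1), \ldots, x_1(n-1), \ldots$ reads off all $n$ coordinates of the initial state in its first $n$ terms, because $x_1(t+1)=x_2(t)$, \ldots, $x_{n-1}(t+1)=x_n(t)$ gives $x_1(k)= x_{k+1}(0)$ for $k\in[0,n-1]$. Hence equal output sequences force $\delta_{2^n}^{a}=\delta_{2^n}^{b}$, a contradiction.

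An alternative, more "algebraic" route avoids the output-reading argument: from $z(t)=Tx(t)$ for all $t\geq 0$ one gets $L_g T x(t) = T L_f x(t)$ for every state $x(t)\in\Delta_{2^n}$, hence $L_g T = T L_f$ as matrices, where $L_g$ is the Galois transition matrix. Since $L_f$ is a permutation matrix it is invertible, so $\operatorname{rank}(L_g T) = \operatorname{rank}(T L_f) = \operatorname{rank}(T)$; if $T$ were singular, $\operatorname{rank}(L_g T)<2^n$, which is consistent, so this identity alone is not quite enough — one must still invoke that distinct Fibonacci initial states are distinguishable by their outputs, which is exactly the coordinate-reading observation above. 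I expect the main (and really the only) subtlety to be pinning down precisely why singularity of $T$ breaks the equivalence, i.e. making rigorous that "$T$ collapses two states $\Rightarrow$ two Fibonacci initial states with different output sequences are sent to the same Galois trajectory"; the Fibonacci shift structure (\ref{T1}), which makes the first $n$ outputs literally equal to the initial state vector, is what closes this gap cleanly. Everything else is bookkeeping with canonical vectors and the fact, already available from (\ref{T8}), that $L_f$ permutes $\Delta_{2^n}$.
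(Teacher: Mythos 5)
Your proposal is correct and follows essentially the same route as the paper's own proof: assume $T$ singular, so two distinct Fibonacci states collapse to one Galois state, hence yield identical output sequences, and then use the Fibonacci shift structure to conclude the two initial states must coincide, a contradiction. Your explicit observation that $x_1(k)=x_{k+1}(0)$ for $k\in[0,n-1]$ makes rigorous the step the paper only gestures at as the ``state transition property of Fibonacci FSRs,'' which is a worthwhile clarification but not a different argument.
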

\begin{proof}
Assume that matrix $T$ is singular, that is, there exist $\tau_{1}\neq \tau_{2}$ and $\tau_{1}, \tau_{2}\in [1, 2^{n}]$ such that $\text{Col}_{\tau_{1}}(T)=\text{Col}_{\tau_{2}}(T)$. It amounts to say that two different states $\delta_{2^{n}}^{\tau_{1}}$ and $\delta_{2^{n}}^{\tau_{2}}$ of the Fibonacci FSR are assigned into the same state of the Galois FSR. Based on this observation, the Fibonacci FSR with initial states $\delta_{2^{n}}^{\tau_{1}}$ and $\delta_{2^{n}}^{\tau_{2}}$ can generate the same outputs sequences. Thereby, according to state transition property of Fibonacci FSRs, it implies that $\delta_{2^{n}}^{\tau_{1}}=\delta_{2^{n}}^{\tau_{2}}$, which is in contradiction with the assumption that $\tau_{1}\neq \tau_{2}$. Therefore, matrix $T$ is verified to be nonsingular.
\end{proof}

\begin{thm}\label{T9}
For any given $n$ stages' Fibonacci FSR, all equivalent Galois FSRs with $n$ stages can be constructed as follows:
\begin{itemize}
  \item For any $\alpha, \beta\in \mathscr{D}$ and $(\delta_{2^{n}}^{i}, \delta_{2^{n}}^{j})\in \mathcal {S}_{\alpha\rightarrow \beta}$, the unique corresponding binary array $(\delta_{2^{n}}^{i'}, \delta_{2^{n}}^{j'})$ are defined and denoted by $(\delta_{2^{n}}^{i}, \delta_{2^{n}}^{j})\sim (\delta_{2^{n}}^{i'}, \delta_{2^{n}}^{j'})$, where $\delta_{2^{n}}^{i'}\in \Omega_{\alpha}$ and $\delta_{2^{n}}^{j'}\in \Omega_{\beta}$;
  \item Matrix $T$ is nonsingular;
  \item $(\delta_{2^{n}}^{i'}, \delta_{2^{n}}^{j'})\neq (\delta_{2^{n}}^{i}, \delta_{2^{n}}^{j})$ holds for all binary arrays.
\end{itemize}
\end{thm}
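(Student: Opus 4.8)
The plan is to prove the stated characterization in both directions: \emph{soundness} — any $T$ together with a pairwise correspondence satisfying the three conditions yields an $n$-stage Galois FSR equivalent to the given Fibonacci FSR — and \emph{completeness} — every equivalent $n$-stage Galois FSR is obtained this way. The starting point is to pin down the algebraic nature of $T$: since $z(t)=Tx(t)$ must carry each canonical state $x(t)\in\Delta_{2^n}$ to a canonical state $z(t)\in\Delta_{2^n}$, the matrix $T$ is forced to be a logical matrix, and Lemma \ref{T25} then makes it a nonsingular logical matrix, i.e.\ a permutation of $\Delta_{2^n}$. From $z(t+1)=Tx(t+1)=TL_{f}x(t)=TL_{f}T^{-1}z(t)$ one gets $L_{g}:=TL_{f}T^{-1}\in\mathscr{L}_{2^n\times 2^n}$, which is the transition matrix of an autonomous Boolean network on $n$ binary nodes, hence of a Galois FSR of the form (\ref{T2}) with $n$ stages; its update functions and their structure matrices are read off from $L_{g}$ by the componentwise decomposition of Lemma \ref{cg1}.

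For soundness, assume the three conditions hold. By Definition \ref{T6} it suffices to show that for every initial Fibonacci state $x(0)\in\Delta_{2^n}$, the matched run $z(t)=Tx(t)$ satisfies $z_{1}(t)=x_{1}(t)$ for all $t\ge 0$. Because $x_{1}(t)=\alpha$ exactly when $x(t)\in\Omega_{\alpha}$ and $z_{1}(t)=\alpha$ exactly when $z(t)\in\Omega_{\alpha}$, this reduces to $T(\Omega_{\alpha})=\Omega_{\alpha}$ for $\alpha\in\mathscr{D}$. This is precisely what the first two conditions deliver: each state $\delta_{2^n}^{i}$ occurs as a predecessor in a unique pair $(\delta_{2^n}^{i},\delta_{2^n}^{j})$ (its successor under $L_{f}$), and since $|\mathcal{S}_{\alpha\to 0}|+|\mathcal{S}_{\alpha\to 1}|=2^{n-2}+2^{n-2}=2^{n-1}=|\Omega_{\alpha}|$ these predecessors exhaust $\Omega_{\alpha}$; the first condition sends such a pair to $(T\delta_{2^n}^{i},T\delta_{2^n}^{j})=(\delta_{2^n}^{i'},\delta_{2^n}^{j'})$ with $\delta_{2^n}^{i'}\in\Omega_{\alpha}$, so $T(\Omega_{\alpha})\subseteq\Omega_{\alpha}$, and the second (nonsingularity) upgrades this to equality. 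Hence $z_{1}(t)=x_{1}(t)$ along every matched run, the output sets coincide, and the third condition forces $T\neq I_{2^n}$, so the constructed Galois FSR is genuinely distinct from the given Fibonacci one.

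For completeness, let an $n$-stage Galois FSR be equivalent to the given Fibonacci FSR via $z(t)=Tx(t)$. As established above, $T$ is a permutation matrix, which is the second condition. Equivalence forces $z_{1}(t)=x_{1}(t)$ on matched runs; evaluating at $t=0$ with $x(0)$ ranging over all of $\Delta_{2^n}$ yields $T(\Omega_{\alpha})=\Omega_{\alpha}$. Then any Fibonacci transition $(\delta_{2^n}^{i},\delta_{2^n}^{j})\in\mathcal{S}_{\alpha\to\beta}$ is carried, via $L_{g}=TL_{f}T^{-1}$, to the Galois transition with predecessor $T\delta_{2^n}^{i}\in\Omega_{\alpha}$ and successor $T\delta_{2^n}^{j}\in\Omega_{\beta}$, which is exactly the correspondence of the first condition; and if the Galois FSR is not the original Fibonacci FSR then $T\neq I_{2^n}$, i.e.\ the third condition. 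Hence the constructions of the theorem are exactly the equivalent Galois FSRs with $n$ stages.

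The step I expect to be the main obstacle is the bookkeeping buried in the soundness direction: checking that the pair-by-pair correspondence of the first condition, when combined with nonsingularity, assembles into a single globally well-defined permutation of $\Delta_{2^n}$ — in particular that a state appearing both as a predecessor in one pair and as a successor in another receives one and the same image — and that the admissible $T$ are precisely the maps acting as an arbitrary permutation of $\Omega_{1}$ and an arbitrary permutation of $\Omega_{0}$. This last point is what reconciles the theorem with the count $(2^{n-1})!^{2}-1$ announced in the introduction (the $-1$ discarding $I_{2^n}$), and it rests squarely on the two structural facts recorded just before this subsection: every state of a Fibonacci FSR has a unique successor, and $|\mathcal{S}_{\alpha\to\beta}|=2^{n-2}$ for all $\alpha,\beta\in\mathscr{D}$.
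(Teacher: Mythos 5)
Your proposal is correct and follows essentially the same route as the paper's proof: conditions one and two determine the transition matrix $L_{g}$ of the new FSR (equivalently $L_{g}=TL_{f}T^{-1}$ with $T$ a permutation preserving $\Omega_{0}$ and $\Omega_{1}$), Lemma \ref{cg1} and the conversion of \cite{ChengD2010pX} recover the logical form, and condition three excludes the Fibonacci configuration since two equivalent Fibonacci FSRs must coincide. Your write-up is in fact more explicit than the paper's — notably in reducing output equivalence to $T(\Omega_{\alpha})=\Omega_{\alpha}$ and in flagging the well-definedness of $T$ across pairs, a point the paper glosses over.
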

\begin{proof}
For arbitrary given Fibonacci FSR, according to equation (\ref{T10}), it concludes that sets $\mathcal {S}_{\alpha\rightarrow \beta}, \alpha, \beta\in \mathscr {D}$ can be determined, and each set contains binary arrays which have the same order of elements. On the one hand, if the first and second conditions are satisfied, then we can determine the state trajectory of the equivalent FSR, which can guarantee the same output sequences. Thus, the transition matrix of the equivalent FSR, denoted by $L_{g}$, can be determined. In \cite{ChengD2010pX}, Cheng \emph{et al.} have showed the method to convert the transition matrix $L_{g}$ back to logical form (\ref{T2}). Thus, refer to \cite{ChengD2010pX}, the equivalent FSR can be constructed. On the other hand, we should guarantee that the equivalent FSR is Galois configuration. If two Fibonacci FSRs are equivalent, then they must be identical. Therefore, the third condition can ensure that the equivalent FSRs must be Galois FSRs. Moreover, by the above analysis, all the equivalent Galois FSRs must be satisfy the above three conditions.
\end{proof}

\begin{rmk}
Theorem \ref{T9} presents an efficient approach to construct all equivalent $n$ stages' Galois FSRs, while \cite{Lu2018AUTp393} only find a unique Galois. Moreover, the method in \cite{Lu2018AUTp393} can not guarantee that the constructed FSR is Galois configuration. Thus, Theorem \ref{T9} develops an improved transformation between Fibonacci and Galois FSRs.
\end{rmk}

According to Theorem \ref{T9}, for any given Fibonacci FSR with $n$ stages, there always exist at least one equivalent Galois FSR with $n$ stages. It is therefore natural to ask whether there exists equivalent Galois FSR which has fewer stages than the Fibonacci FSR while generating the same outputs sequences. According to Lemma \ref{T25}, the answer can be obtained immediately.
\begin{thm}\label{T26}
For any given Fibonacci FSR with $n$ stages, there does not exist any equivalent Galois FSR which has fewer stages than $n$.
\end{thm}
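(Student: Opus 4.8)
The plan is to prove the statement by contradiction, via a short counting argument on output sequences that extends the idea already used in the proof of Lemma~\ref{T25}.

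The first step is to determine how many distinct output sequences an $n$-stage Fibonacci FSR actually has. Because of the chain structure $x_{j}(t+1)=x_{j+1}(t)$ for $j\in[1,n-1]$, one gets by induction that $x_{j}(t)=x_{1}(t+j-1)$ for every $j\in[1,n]$; hence the first $n$ symbols of the output sequence generated from an initial state $X(0)=(x_{1}(0),\dots,x_{n}(0))$ are exactly the coordinates $x_{1}(0),x_{2}(0),\dots,x_{n}(0)$ of $X(0)$. Consequently two distinct initial states of the Fibonacci FSR already disagree within the first $n$ output symbols, so they produce distinct output sequences, and the $n$-stage Fibonacci FSR has exactly $2^{n}$ pairwise distinct output sequences.

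The second step is the contradiction itself. Suppose there were an equivalent Galois FSR with $m$ stages, $m<n$. It has only $2^{m}$ states, and since it is deterministic each state determines a single output sequence, so its set of output sequences has at most $2^{m}\le 2^{n-1}<2^{n}$ elements. But by Definition~\ref{T6}, equivalence forces the two sets of output sequences to coincide, which is impossible since the Fibonacci side has $2^{n}$ of them. Phrased in the language of Lemma~\ref{T25}: a coordinate transformation $z(t)=Tx(t)$ would have to map the $2^{n}$ states $\Delta_{2^{n}}$ into the $2^{m}<2^{n}$ states $\Delta_{2^{m}}$, so by pigeonhole two distinct Fibonacci states $\delta_{2^{n}}^{\tau_{1}}\neq\delta_{2^{n}}^{\tau_{2}}$ would be sent to the same Galois state, forcing the Fibonacci FSR started at $\delta_{2^{n}}^{\tau_{1}}$ and at $\delta_{2^{n}}^{\tau_{2}}$ to emit identical outputs and hence $\delta_{2^{n}}^{\tau_{1}}=\delta_{2^{n}}^{\tau_{2}}$ by the state-transition property, a contradiction.

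There is no genuinely hard step here. The only points that need care are the elementary observation that the shift structure lets one read the initial state off the first $n$ output symbols (so distinct initial states give distinct output sequences), and the fact that equivalence is an equality of \emph{sets} of sequences, which is what legitimises bounding the Galois side by its state count $2^{m}$.
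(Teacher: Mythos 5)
Your proof is correct and follows essentially the same route as the paper, which argues in one line that a fewer-stage Galois FSR would force the coordinate matrix $T$ to be singular (by pigeonhole), contradicting Lemma~\ref{T25}. Your write-up merely makes explicit the fact the paper leaves implicit in Lemma~\ref{T25} — that the shift structure lets the initial state be read off the first $n$ output symbols, so the Fibonacci FSR has exactly $2^{n}$ distinct output sequences while an $m$-stage Galois FSR has at most $2^{m}$.
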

Actually if the equivalent Galois FSR has fewer stages, then the matrix $T$ must be singular, which is in contradiction with Lemma \ref{T25}.
\begin{rmk}\label{T14}
For a given Fibonacci FSR, more than one equivalent Galois FSR can be constructed based on Theorem \ref{T9}, because the update functions may be different. There exist a total of $(2^{n-1})!^{2}$ FSRs satisfying the first and second conditions. Therefore, for a given Fibonacci FSR, there are totally $(2^{n-1})!^{2}-1$ equivalent Galois FSRs.
\end{rmk}

\begin{rmk}\label{T18}
According to Theorem \ref{T9}, for an arbitrary given Fibonacci FSR, there must exist equivalent Galois FSRs. Compared with \cite{Dubrova2009ITITp5263}, the uniform form is unnecessary to be required in Theorem \ref{T9}. Moreover, the constructed FSRs must be Galois configuration. However, \cite{Lu2018AUTp393} focused on the output sequences, and failed to analyze the configurations of constructed FSRs.
\end{rmk}
\begin{example}\label{T11}
Consider the following Fibonacci FSR with $4$ stages:
\begin{equation}\label{T12}
\left\{\begin{aligned}
x_{1}(t+1)=&x_{2}(t),\\
x_{2}(t+1)=&x_{3}(t),\\
x_{3}(t+1)=&x_{4}(t),\\
x_{4}(t+1)=&(x_{1}(t)\wedge \neg x_{2}(t)\wedge \neg x_{3}(t)\wedge x_{4}(t))\\&\vee
(\neg x_{1}(t)\wedge (x_{2}(t)\wedge x_{3}(t))).\\
\end{aligned}\right.
\end{equation}
Followed by Lemma \ref{cg1}, the structure matrix of the feedback function is $M_{f}=\delta_{2}[2~2~2~2~2~2~1~2~1~1~1~1~1~1~2~2]$. Furthermore, by equation (\ref{T8}) or (\ref{T10}), we can calculate that $L_{f}=\delta_{16}[2~4~6~8~10~12~13~16~1~3~5~7~9~11~14~15]$. Moreover,
\begin{equation*}
\begin{aligned}
&\Omega_{1}=\left\{\delta_{16}^{i}:~i\in [1, 8]\right\};\\
&\Omega_{0}=\left\{\delta_{16}^{j}:~j\in [9, 16]\right\};\\
&\mathcal{S}_{1\rightarrow 1}=\left\{(\delta_{16}^{3}, \delta_{16}^{6}), (\delta_{16}^{1}, \delta_{16}^{2}), (\delta_{16}^{2}, \delta_{16}^{4}), (\delta_{16}^{4}, \delta_{16}^{8})\right\};\\
&\mathcal{S}_{1\rightarrow 0}=\left\{(\delta_{16}^{5}, \delta_{16}^{10}), (\delta_{16}^{6}, \delta_{16}^{12}), (\delta_{16}^{7}, \delta_{16}^{13}), (\delta_{16}^{8}, \delta_{16}^{16})\right\};\\
&\mathcal{S}_{0\rightarrow 1}=\left\{(\delta_{16}^{11}, \delta_{16}^{5}), (\delta_{16}^{10}, \delta_{16}^{3}), (\delta_{16}^{12}, \delta_{16}^{7}), (\delta_{16}^{9}, \delta_{16}^{1})\right\};\\
&\mathcal{S}_{0\rightarrow 0}=\left\{(\delta_{16}^{15}, \delta_{16}^{14}), (\delta_{16}^{14}, \delta_{16}^{11}), (\delta_{16}^{13}, \delta_{16}^{9}), (\delta_{16}^{16}, \delta_{16}^{15})\right\}.\\
\end{aligned}
\end{equation*}
According to Theorem \ref{T9}, for $\mathcal{S}_{1\rightarrow 1}$, let $(\delta_{16}^{3}, \delta_{16}^{6})\sim (\delta_{16}^{2}, \delta_{16}^{5})$, $(\delta_{16}^{1}, \delta_{16}^{2})\sim (\delta_{16}^{1}, \delta_{16}^{3})$, $(\delta_{16}^{2}, \delta_{16}^{4})\sim (\delta_{16}^{3}, \delta_{16}^{4})$ and $(\delta_{16}^{4}, \delta_{16}^{8})\sim (\delta_{16}^{4}, \delta_{16}^{8})$; For $\mathcal{S}_{1\rightarrow 0}$, let $(\delta_{16}^{5}, \delta_{16}^{10})\sim (\delta_{16}^{7}, \delta_{16}^{9})$, $(\delta_{16}^{6}, \delta_{16}^{12})\sim (\delta_{16}^{5}, \delta_{16}^{10})$, $(\delta_{16}^{7}, \delta_{16}^{13})\sim (\delta_{16}^{6}, \delta_{16}^{16})$ and $(\delta_{16}^{8}, \delta_{16}^{16})\sim (\delta_{16}^{8}, \delta_{16}^{13})$; For $\mathcal{S}_{0\rightarrow 1}$, let $(\delta_{16}^{11}, \delta_{16}^{5})\sim (\delta_{16}^{12}, \delta_{16}^{7})$, $(\delta_{16}^{10}, \delta_{16}^{3})\sim (\delta_{16}^{9}, \delta_{16}^{2})$, $(\delta_{16}^{12}, \delta_{16}^{7})\sim (\delta_{16}^{10}, \delta_{16}^{6})$ and $(\delta_{16}^{9}, \delta_{16}^{1})\sim (\delta_{16}^{14}, \delta_{16}^{1})$; For $\mathcal{S}_{0\rightarrow 0}$, let $(\delta_{16}^{15}, \delta_{16}^{14})\sim (\delta_{16}^{15}, \delta_{16}^{11})$, $(\delta_{16}^{14}, \delta_{16}^{11})\sim (\delta_{16}^{11}, \delta_{16}^{12})$, $(\delta_{16}^{13}, \delta_{16}^{9})\sim (\delta_{16}^{16}, \delta_{16}^{14})$ and $(\delta_{16}^{16}, \delta_{16}^{15})\sim (\delta_{16}^{13}, \delta_{16}^{15})$. Obviously, for all the above binary arrays, there is a binary array $(\delta_{16}^{1}, \delta_{16}^{3})\neq (\delta_{16}^{1}, \delta_{16}^{2})$. Therefore, the corresponding Galois FSR can be constructed, and the transition matrix can be computed as $L_{g}=\delta_{16}[3~5~4~8~10~16~9~13~2~6~12~7~15~1~11~14]$. The coordinate transformation $z(t)=Tx(t)$ is associated with $T=\delta_{16}[1~3~2~4~7~5~6\\8~14~9~12~10~16~11~15]$. Afterwards, refer to \cite{ChengD2010pX}, the corresponding logical form of Galois FSR can be obtained as follows:
\begin{equation}\label{T15}
\left\{\begin{aligned}
z_{1}(t+1)=&[z_{1}(t)\wedge z_{2}(t)]\vee [\neg z_{1}(t)\wedge [(z_{2}(t)\wedge (z_{3}(t)\\&\vee \neg z_{4}(t)))\vee \neg(z_{2}(t)\vee (z_{3}(t)\rightarrow z_{4}(t)))]],\\
z_{2}(t+1)=&[z_{1}(t)\wedge z_{4}(t)]\vee [\neg z_{1}(t)\wedge [(z_{2}(t)\wedge z_{4}(t))\\&\vee (\neg z_{2}\wedge (z_{3}(t)\oplus z_{4}(t)))]],\\
z_{3}(t+1)=&[z_{1}(t)\wedge [(z_{2}(t)\wedge (z_{3}(t)\oplus z_{4}(t)))\vee
(z_{2}(t)\\&\wedge (z_{3}(t)\rightarrow z_{4}(t)))]]\vee [\neg z_{1}(t)\wedge [(z_{2}(t)\\&\wedge z_{3}(t))\wedge \neg(z_{2}(t)\vee z_{4}(t))]],\\
z_{4}(t+1)=&[z_{1}(t)\wedge [(z_{2}(t)\wedge z_{3}(t))\vee \neg (z_{2}(t)\vee z_{3}(t))]]\\&\vee [\neg z_{1}(t)\wedge [(z_{2}(t)\wedge \neg(z_{3}(t)\vee z_{4}(t)))\vee \\&(\neg z_{2}(t)\wedge (z_{3}(t)\vee z_{4}(t)))]].
\end{aligned}\right.
\end{equation}

According to Theorem \ref{T9}, systems (\ref{T12}) and (\ref{T15}) are equivalent. As an illustration, system (\ref{T12}) with initial state $\delta_{16}^{1}$ generates the following output sequence with period $16$:
$$1~1~1~0~0~0~0~1~0~1~1~0~1~0~0~1\cdots.$$
Additionally, system (\ref{T14}) with initial state $\delta_{16}^{3}$ can generate the same output sequence with period $16$. Actually, according to Remark \ref{T14}, there exist a total of $(2^{3})!^{2}-1$ equivalent Galois FSRs. Due to the limitation of space, we only give a feasible equivalent Galois FSR.
\end{example}

It can be observed from Galois FSR (\ref{T15}) that every subsystem includes many logical gates, which decrease the speed of generating output sequences. Table \ref{T16} shows the space and time delays of some logical gates in a typical 90nm CMOS technology \cite{Dubrova2017espressop273}. Therefore, from the viewpoint of cost and efficiency, we should reduce the complexity of update functions when constructing the equivalent FSRs, that is, the number of logical gates. Afterwards, we propose the following algorithm to further reduce the number of logical gates and variables.
\begin{table}[H]
\caption{Parameters of gates for a typical 90nm CMOS technology.}\label{T16}
\centering
\begin{tabular}{|c|c|c|c|}
\hline
Gate & Area, $\mu m^{2}$ & Area, GE & Delay, ps\\
\hline
2-input NANS& 3.7& 1& 33\\
2-input NOR& 3.7& 1& 57\\
2-input AND& 5& 1.4& 87\\
2-input XOR& 10& 2.7& 115\\
\hline
\end{tabular}
\end{table}

\begin{algorithm}
\caption{To reduce the number of logical gates and variables.}
{\bf Input:} The set of all feasible transition matrices obtained from Theorem \ref{T9}: $K=\left\{L_g^1,L_g^2,\cdots,L_g^d\right\}$.

{\bf Output:} Galois FSR.

\begin{algorithmic}[1]
\Procedure{REDUCED TRANSITION MATRIX}{}

\State Set $i:=1$, $l:=n$ and $k=1$

\While{$i\leq d$}

  \State Assign $l_i=n$ and $j_i=1$
     \While{$j_i\leq n$}

     \If{$L_g^iW_{[2,2^{j-1}]}(M_n-I_2)=0$}

     \State Assign $l_i:=l_i-1$

     \State Increase $j_i$ by $1$

     \EndIf
     \EndWhile
     \If{$l_i\leq l$}
     \State Let $l:=l_i$
     \State Set $L_g^\ast:=L_g^i$
     \Else
     \State Break if
     \EndIf

  \State Increase $i$ by $1$
\EndWhile

\While{$k\leq n$}
\State Define $S_k^n:=S_1^nW_{[2^{k-1},2]}$
\State Compute $M_k^\ast:=S_k^nL_g^\ast$
\State Set $j_k:=1$ and $\Lambda_k:=\{1,2,\cdots,n\}$
\While{$j_k \leq n$}
\If{$M_k^\ast W_{[2,2^{j_k-1}]}(M_n-I_2)=0$}
\State Assign $\Lambda_k:=\Lambda_k\backslash \{j_k\}$
\State Calculate $M_k:=M_kW_{[2,2^{j-1}]}\delta_2^1$
\EndIf
\State Denote $x_k(t+1)=M_k\ltimes_{j_r\in\Lambda_k}x_{j_r}(t)$, $r=1,2,\cdots,\mid \Lambda_k \mid$
\State Solve the logical function $f_k$
\EndWhile
\EndWhile

\Return Galois FSR (\ref{T2})
\EndProcedure
\end{algorithmic}\label{T17}
\end{algorithm}
\begin{rmk}
In Algorithm \ref{T17}, the formula in line 6 is utilized to judge whether the $i$-th logical system is dependent of variable $x_j$. Moreover, in line 20-21, $S_k^n$ is calculated to obtain the structure matrix of $f_k$. In the manner as that in line 6, the sub-procedure developed in line 24-26 is devoted to remove the independent variables.
\end{rmk}

\subsection{From Galois FSRs to Fibonacci FSRs }
It can be learned from Remark \ref{T14} that for any Fibonacci FSR, there must exist some equivalent Galois FSRs. Inversely, for any Galois FSR, whether there exist equivalent Fibonacci FSRs. Actually, there dose not exist a definitely answer, which depends on the characters of the given Galois FSR. Assume that there are two attractors given by a fixed point and a cyclic attractor with period $r$, that are $\delta_{2^{n}}^{\alpha_{0}}$ and $\delta_{2^{n}}^{\alpha_{1}}\rightarrow \delta_{2^{n}}^{\alpha_{2}}\rightarrow\cdots\rightarrow \delta_{2^{n}}^{\alpha_{r}}$ with $r\geqslant n$. Then, by resorting to the depth-first search algorithm, assume that there are two trajectories which traverse all the states, to attractors, given by
\begin{equation*}
\begin{aligned}
&(1)~\delta_{2^{n}}^{\beta_{1}}\rightarrow \delta_{2^{n}}^{\beta_{2}}\rightarrow\cdots
\rightarrow\delta_{2^{n}}^{\beta_{s_{1}}}\rightarrow
\delta_{2^{n}}^{\alpha_{0}}\rightarrow
\delta_{2^{n}}^{\alpha_{0}}\rightarrow\cdots;\\
&(2)~\delta_{2^{n}}^{\gamma_{1}}\rightarrow \delta_{2^{n}}^{\gamma_{2}}\rightarrow\cdots
\rightarrow\delta_{2^{n}}^{\gamma_{s_{2}}}\rightarrow
\delta_{2^{n}}^{\alpha_{1}}\rightarrow\cdots\rightarrow \delta_{2^{n}}^{\alpha_{r}}\rightarrow \delta_{2^{n}}^{\alpha_{1}}\rightarrow\cdots.
\end{aligned}
\end{equation*}
Moreover, the corresponding output sequences are assumed to be
\begin{equation*}
\begin{aligned}
&(1)~a_{\beta_{1}}~a_{\beta_{2}}~a_{\beta_{3}}\cdots~a_{\beta_{n}}
\cdots~a_{\beta_{s_{1}}}~a_{\omega_{0}}~a_{\omega_{0}}\cdots;\\
&(2)~b_{\gamma_{1}}~b_{\gamma_{2}}~b_{\gamma_{3}}\cdots
~b_{\gamma_{s_{2}}}~b_{\omega_{1}}\cdots~b_{\omega_{r}}~b_{\omega_{1}}\cdots.\\
\end{aligned}
\end{equation*}
Here, $a_{\omega_{0}}$ and $b_{\omega_{i}}, i\in [1, r]$ are the corresponding outputs of states $\delta_{2^{n}}^{\alpha_{0}}$ and $\delta_{2^{n}}^{\alpha_{i}}$, respectively. Without loss of generality, we also assume that output sequence $(2)$ is ultimately periodic and the period equals to $r$.

If the equivalent Fibonacci FSR with $n$ stages can be constructed, followed by the property of states transition of Fibonacci FSRs, we have the following state trajectories£º
$(a_{\beta_{1}}, a_{\beta_{2}},\cdots, a_{\beta_{n}})\rightarrow (a_{\beta_{2}}, a_{\beta_{3}},\cdots, a_{\beta_{n+1}})\rightarrow\cdots\rightarrow(a_{\omega_{0}}, a_{\omega_{0}},\cdots, a_{\omega_{0}})$, and $(b_{\gamma_{1}}, b_{\gamma_{2}},\cdots, b_{\gamma_{n}})\rightarrow (b_{\gamma_{2}}, b_{\gamma_{3}},\cdots, b_{\gamma_{n+1}})\rightarrow\cdots\rightarrow(b_{\omega_{r}}, b_{\omega_{1}},\cdots,\\b_{\omega_{r-n+1}})$.

The derived state transition digraph of Galois FSRs, denoted by $G_{n}$, contains nodes and edges, where nodes represent the states of FSRs with $n$ stages. If state $(a_{1}, a_{2},\cdots, a_{n})$ is changed into $(a'_{1}, a'_{2},\cdots, a'_{n})$ via update functions, then there exists a directed edge from the node representing $(a_{1}, a_{2},\cdots, a_{n})$ to the node representing $(a'_{1}, a'_{2},\cdots, a'_{n})$. Additionally, let the output-degree of node $(a_{1}, a_{2},\cdots, a_{n})$ be the number of edges which leave from the node.
\begin{thm}\label{T19}
Consider Galois FSR (\ref{T2}) with outputs sequences $(1)$ and $(2)$, the equivalent Fibonacci FSR can be constructed, if and only if, the output degree of every node equals to $1$ in  derived state transition digraph $G_{n}$.
\end{thm}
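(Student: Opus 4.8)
The plan is to prove both directions of the equivalence by carefully translating "equivalent Fibonacci FSR with $n$ stages exists" into a combinatorial condition on the derived state transition digraph $G_n$. First I would recall the key structural fact about Fibonacci FSRs established earlier: for a Fibonacci FSR, a state $(a_1,\dots,a_n)$ (here the $a_i$ are the recorded output bits) has a \emph{unique} successor, obtained by shifting and appending $f(a_1,\dots,a_n)$. Thus in any $n$-stage Fibonacci FSR the out-degree of every state node is exactly $1$. Conversely, and this is the content I would build the "if" direction around, if in $G_n$ every node has out-degree $1$, then the map sending each length-$n$ output window to its successor window is a well-defined function on $\mathscr{D}^n$, and this function is necessarily of the shift-and-feedback form because consecutive windows overlap in $n-1$ positions; reading off the last coordinate of the successor as a function of $(a_1,\dots,a_n)$ gives the feedback function $f$, hence an $n$-stage Fibonacci FSR.

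The forward direction ("only if") I would argue by contraposition using the output-sequence definition of equivalence (Definition~\ref{T6}). Suppose some node of $G_n$ has out-degree $\geq 2$: there are two windows $(a_1,\dots,a_n)$ that are equal as length-$n$ strings but arise along the Galois trajectories (1), (2) with \emph{different} continuations, i.e. different values of the next output bit. Then the set of output sequences produced by the Galois FSR contains two sequences that agree on a length-$n$ prefix but differ at the next position. In a Fibonacci FSR the future of an output sequence is completely determined by its most recent $n$ output bits, so no single $n$-stage Fibonacci FSR can produce both of these output sequences; hence no equivalent $n$-stage Fibonacci FSR exists. The remaining nuance is that $G_n$ as defined collects exactly the windows appearing along the traversing trajectories (1) and (2), which by hypothesis visit all states, so "out-degree" is well defined for every node and the contraposition covers all cases.

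For the "if" direction I would then also need to produce the matching initial state and verify the output sets actually coincide, not merely that the construction is a legitimate Fibonacci FSR. Concretely: given any Galois initial state, run it to obtain an output sequence, take its first $n$ bits as the initial state of the constructed Fibonacci FSR, and use the out-degree-$1$ hypothesis to show by induction that the two trajectories generate the same output bit at every step; conversely any Fibonacci output sequence, read back through the (bijective) window-successor map, corresponds to a Galois trajectory. It is here that the hypotheses "$r \geq n$" and "output sequence (2) is ultimately periodic with period $r$" get used: they guarantee the cyclic part of the attractor is long enough that its length-$n$ windows are genuine nodes of $G_n$ and that the correspondence between cycles in $G_n$ and output cycles is exact rather than collapsing. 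I would phrase this via the algebraic form (\ref{T5}): the constructed Fibonacci transition matrix $L_g$ is obtained by relabelling nodes of $G_n$ along the unique out-edges, and nonsingularity of the associated $T$ is automatic since distinct windows map to distinct windows when every out-degree is $1$.

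The main obstacle I expect is not the out-degree bookkeeping but making the equivalence of \emph{output sets} airtight in the presence of preperiods: trajectories (1) and (2) have transient tails $\beta_1,\dots,\beta_{s_1}$ and $\gamma_1,\dots,\gamma_{s_2}$, and one must check that the length-$n$ output windows along these tails are mutually consistent (no two equal windows with different successors) precisely when the out-degree condition holds globally — which is why the theorem is stated in terms of \emph{every} node of $G_n$ rather than just nodes on the attractors. Handling the boundary windows that straddle the transient and the attractor (e.g. the window $(b_{\gamma_{s_2-n+2}},\dots,b_{\gamma_{s_2}},b_{\omega_1})$) is the delicate point, and I would devote the core of the proof to showing that the depth-first traversal hypothesis forces these to be consistent exactly under the stated condition.
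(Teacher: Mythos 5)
Your proposal is correct and follows essentially the same route as the paper: both directions hinge on the observation that an $n$-stage Fibonacci FSR's state is exactly the window of its last $n$ output bits, so an out-degree $\geq 2$ node would force one column of the Fibonacci transition matrix $L_f$ to take two values (impossible), while out-degree $1$ everywhere determines the constrained columns of $L_f$ and leaves the remaining columns free subject to the shift structure (\ref{T10}). Your treatment is in fact more careful than the paper's about matching initial states, the preperiod/attractor boundary windows, and verifying that the output \emph{sets} coincide — points the paper's proof passes over silently.
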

\begin{proof}
State transition trajectories in digraph $\mathcal {G}_{n}$ are $(a_{\beta_{1}}, a_{\beta_{2}},\cdots, a_{\beta_{n}})\rightarrow (a_{\beta_{2}}, a_{\beta_{3}},\cdots, a_{\beta_{n+1}})\rightarrow\cdots\rightarrow(a_{\omega_{0}}, a_{\omega_{0}},\cdots, a_{\omega_{0}})\rightarrow(a_{\omega_{0}}, a_{\omega_{0}},\cdots, a_{\omega_{0}})$, and $(b_{\gamma_{1}}, b_{\gamma_{2}},\cdots,\\b_{\gamma_{n}})\rightarrow (b_{\gamma_{2}}, b_{\gamma_{3}},\cdots, b_{\gamma_{n+1}})\rightarrow\cdots\rightarrow(b_{\omega_{r}}, b_{\omega_{1}},\cdots, b_{\omega_{r-n+1}})\rightarrow(b_{\omega_{1}},b_{\omega_{2}},\cdots, b_{\omega_{n}})$. Assume that the corresponding canonical vector forms are $\delta_{2^{n}}^{\eta_{\beta_{1}}}\rightarrow \delta_{2^{n}}^{\eta_{\beta_{2}}}\rightarrow\cdots\rightarrow \delta_{2^{n}}^{\eta_{\omega_{0}}}$, and $\delta_{2^{n}}^{\kappa_{\gamma_{1}}}\rightarrow \delta_{2^{n}}^{\kappa_{\gamma_{2}}}\rightarrow\cdots \delta_{2^{n}}^{\kappa_{\omega_{r}}}\rightarrow\delta_{2^{n}}^{\kappa_{\omega_{1}}}$. Therefore, the transition matrix $L_{f}$ must satisfy
\begin{equation*}
\left\{\begin{aligned}
&\text{Col}_{\eta_{\beta_{1}}}(L_{f})=\delta_{2^{n}}^{\eta_{\beta_{2}}},\cdots, \text{Col}_{\eta_{\omega_{0}}}(L_{f})=\delta_{2^{n}}^{\eta_{\omega_{0}}};\\
&\text{Col}_{\kappa_{\gamma_{1}}}(L_{f})=\delta_{2^{n}}^{\kappa_{\gamma_{2}}},\cdots, \text{Col}_{\kappa_{\omega_{r}}}(L_{f})=\delta_{2^{n}}^{\kappa_{\omega_{1}}}.\\
\end{aligned}\right.
\end{equation*}
Moreover, the output degree of every node equals to $1$. If there exists one node, whose output degree equals to $2$, then it implies that there exists one column of $L_{f}$ equaling two values. Obviously, there does not exist such $L_{f}$. Therefore, some columns of $L_{f}$ can be determined and satisfy equation (\ref{T10}). Additionally, the rest columns satisfy equation (\ref{T10}). Thus, the constructed Fibonacci FSR with $L_{f}$ is equivalent to Galois FSR (\ref{T2}).
\end{proof}
\begin{lem}\label{T33}
For any given binary sequence, there exists a Galois FSR at least that can generate the binary sequence, but it is not true for Fibonacci FSRs.
\end{lem}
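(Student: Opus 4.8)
The plan is to split Lemma \ref{T33} into its two assertions and treat them separately. For the positive part --- that every binary sequence is generated by some Galois FSR --- I would distinguish between the case of a periodic (or ultimately periodic) sequence and the case of a general (possibly aperiodic) infinite sequence, but since FSRs with finitely many stages only ever produce ultimately periodic output, the meaningful claim is for ultimately periodic $S$. Given such an $S$ with period $n_1$ and preperiod $n_2$, I would build a Galois FSR directly on the circular state graph: take enough stages $n$ so that $2^n$ exceeds the length $n_1+n_2$ of one full transient-plus-cycle, list the distinct canonical vectors $\delta_{2^n}^{\beta_1}\to\delta_{2^n}^{\beta_2}\to\cdots$ along the forced trajectory (choosing the first coordinate of $\delta_{2^n}^{\beta_t}$ to equal the $t$-th output symbol $a_t$), and set the remaining columns of a logical transition matrix $L$ arbitrarily. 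Because in a Galois FSR \emph{every} column of $L$ may be chosen freely --- there is no chain constraint linking $z_1,\dots,z_{n-1}$ as in \eqref{T1} --- the matrix $L$ so defined is a genuine Galois transition matrix, and by the construction of Section \ref{s2} it converts back into logical update functions $f_1,\dots,f_n$ via the method of \cite{ChengD2010pX}. With initial state $\delta_{2^n}^{\beta_1}$ this FSR outputs exactly $S$.

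For the negative part --- that this fails for Fibonacci FSRs --- the key is the rigid successor structure of Fibonacci FSRs recorded after \eqref{T2}: the successor of $(x_1,\dots,x_n)$ is $(x_2,\dots,x_n,f(x))$, so the first $n-1$ coordinates of the next state are completely determined by the last $n-1$ coordinates of the current state. Consequently the output sequence of a Fibonacci FSR with $n$ stages is ultimately periodic with period at most $2^n$, and moreover (this is the sharper obstruction) its output window of length $n$ determines the entire future. To produce a concrete counterexample I would exhibit a single ultimately periodic binary sequence that no Fibonacci FSR can realize: e.g. fix any $n$ and consider a sequence whose length-$n$ sliding windows include two equal windows followed by different next symbols, or, cleaner, simply invoke a cardinality/periodicity bound --- a sequence with period strictly larger than $2^n$ cannot come from an $n$-stage Fibonacci FSR, yet by the positive part it comes from some Galois FSR; and since for \emph{every} $n$ one can pick such a sequence, no single Fibonacci FSR works for that sequence, and in fact one fixed sequence (chosen with arbitrarily large period, e.g. not eventually periodic, or eventually periodic with a prime period exceeding every $2^n$ is impossible, so instead) --- here I would settle on the precise statement that there is a binary sequence generated by a Galois FSR but by no Fibonacci FSR, realized by taking any sequence whose set of length-$n$ windows is ``inconsistent'' in the sense that the shift map on windows is not well-defined.

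Concretely, the cleanest counterexample to pin down is this: consider the eventually periodic sequence $S = 1\,1\,0\,1\,1\,0\,\cdots$ --- no, rather one must be careful, since that \emph{is} Fibonacci-realizable; instead I would take a sequence such as $0\,1\,0\,0\,1\,1\,0\,1\,\cdots$ engineered so that the window $(0,1)$ is followed at one occurrence by $0$ and at another by $1$; then no $2$-stage Fibonacci FSR generates it, and by enlarging the ambiguous-window length one rules out $n$-stage Fibonacci FSRs for every $n$, while the positive part already furnishes a Galois FSR. The main obstacle I anticipate is not the positive direction (which is a direct construction) but making the negative direction genuinely uniform: one must produce \emph{one} binary sequence that simultaneously defeats Fibonacci FSRs of \emph{all} stage counts $n$, which forces the sequence to be non-ultimately-periodic, and then one must be honest that such a sequence is \emph{not} realizable by any finite Galois FSR either --- so the correct reading of the lemma, and the version I would prove, is the existential one: \emph{there exists} a binary sequence (necessarily ultimately periodic, with large period) generated by some Galois FSR but by no Fibonacci FSR with the \emph{same} stage count, which follows from Theorem \ref{T19} --- a Galois FSR whose derived digraph $G_n$ has a node of out-degree $2$ cannot be matched by any $n$-stage Fibonacci FSR, yet by the positive part its output is Galois-realizable. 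I would therefore anchor the proof on Theorem \ref{T19}: exhibit a Galois FSR violating the out-degree-one condition, note its output is trivially Galois-generated, and conclude no equivalent Fibonacci FSR exists.
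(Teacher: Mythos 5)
Your proposal follows essentially the same route as the paper. For the positive half, the paper's proof is exactly your construction: for an ultimately periodic sequence with period $n_1$ and preperiod $n_2$ it takes $n\geq\log_2(n_1+n_2)$ stages and assigns the columns of $L_g$ so that states with first coordinate $a_i$ are sent to states with first coordinate $a_{i+1}$, the remaining columns being free precisely because a Galois FSR imposes no chain constraint. For the negative half the paper does what you finally settle on: it appeals to Theorem \ref{T19} and observes that the sequence ``can not be guaranteed'' to satisfy the out-degree-one condition, concluding only that a Fibonacci realization ``possibly'' does not exist. The difficulty you wrestle with at length is therefore not a defect of your attempt relative to the paper's proof: Theorem \ref{T19} only excludes Fibonacci FSRs with the \emph{same} number of stages, and, as you correctly sense, every ultimately periodic sequence is realized by some Fibonacci FSR once the stage count is large enough (the length-$n$ window-to-next-symbol map becomes well defined for $n\geq n_1+n_2$), so the only version of the negative assertion that can be made rigorous is the fixed-stage-count one; the paper's own proof is no more precise on this point than yours.
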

\begin{proof}
Assume that any given binary sequence denoted by $S'$, which is ultimately periodic, is $a_{1}~a_{2}~a_{3}\cdots~$ with its period and preperiod being $n_{1}$ and $n_{2}$, then the Galois FSRs with $n,~n\geq \text{log}_2(n_{1}+n_{2})$, stages can be constructed as: $Col_{i}(L_{g})=\delta_{2^{n}}^{j}$ where $i\in [\Omega_{a_{i}}]$ and $j\in [\Omega_{a_{i+1}}]$. While, the sequence $S'$ can not be guaranteed satisfy Theorem \ref{T19}, then there possibly dose not exist one Fibonacci FSR generating sequence $S'$, which completes the proof.
\end{proof}
It follows from Theorem \ref{T26} that there does not exist any equivalent Galois FSR with fewer stages than $n$ for any given Fibonacci FSR with $n$ stages. While for one given Galois FSR with $n$ stages, there may exist an equivalent Fibonacci FSR whose the number of stages is less than $n$. The following example shows the existence of this kind of Fibonacci FSR.
\begin{example}
Consider the following Galois FSR with $3$ stages:
\begin{equation}\label{T20}
\left\{\begin{aligned}
z_{1}(t+1)=&z_{1}(t)\vee \neg z_{2}(t),\\
z_{2}(t+1)=&(z_{1}(t)\wedge \neg z_{2}(t)\wedge z_{3}(t))\vee (\neg z_{1}(t)\wedge z_{2}(t)),\\
z_{3}(t+1)=&z_{1}(t)\wedge (z_{2}(t)\leftrightarrow z_{3}(t)).\\
\end{aligned}\right.
\end{equation}
Then one has that the transition matrix $L_{g}=\delta_{8}[3~4~2~3~6~6~4~4]$. By the depth-first search algorithm, four states trajectories containing all the states can be found, as: $\textcircled{1}~ \delta_{8}^{1}\rightarrow \delta_{8}^{3}\rightarrow \delta_{8}^{2}\rightarrow \delta_{8}^{4}\rightarrow \delta_{8}^{3}\rightarrow \delta_{8}^{2}\rightarrow\cdots$; $\textcircled{2}~ \delta_{8}^{5}\rightarrow \delta_{8}^{6}\rightarrow \delta_{8}^{6}\rightarrow\cdots$; $\textcircled{3}~ \delta_{8}^{8}\rightarrow \delta_{8}^{4}\rightarrow \delta_{8}^{3}\rightarrow \delta_{8}^{2}\rightarrow \delta_{8}^{4}\rightarrow\cdots$; $\textcircled{4}~ \delta_{8}^{7}\rightarrow \delta_{8}^{4}\rightarrow \delta_{8}^{3}\rightarrow \delta_{8}^{2}\rightarrow \delta_{8}^{4}\rightarrow\cdots$. Then the corresponding outputs sequences are: $\textcircled{1}~1~1~1~1~1~1\cdots$; $\textcircled{2}~ 0~0~0\cdots$; $\textcircled{3}~0~1~1~1~1\cdots$; $\textcircled{4}~ 0~1~1~1~1\cdots$. Let coordinate transformation be $x(t)=\delta_{4}[1~1~1~1~4~4~3~3]z(t)$ with $z(t)\in \Delta_{2^{3}}$ and $x(t)\in \Delta_{2^{2}}$ being the states of Galois FSR (\ref{T20}) and the equivalent Fibonacci FSR, respectively. Then we can construct two Fibonacci FSRs with $2$ stages which are equivalent to Galois FSR (\ref{T20}), that are
\begin{equation}\label{T21}
\left\{\begin{aligned}
x_{1}(t+1)=&x_{2}(t)\\
x_{2}(t+1)=&x_{1}\vee x_{2}(t),\\
\end{aligned}\right.
\end{equation}
and
\begin{equation}\label{T22}
\left\{\begin{aligned}
x_{1}(t+1)=&x_{2}(t)\\
x_{2}(t+1)=&x_{2}(t).\\
\end{aligned}\right.
\end{equation}
Take Fibonacci FSR (\ref{T21}) for example, when initial states are $x(0)=\delta_{4}^{1}, \delta_{4}^{4}, \delta_{4}^{3}, \delta_{4}^{3}$, respectively, Fibonacci FSR (\ref{T21}) can generate the same outputs sequences corresponding $\textcircled{1}-\textcircled{4}$.
\end{example}
It follows from the above example that there may exists an equivalent Fibonacci FSR which has fewer stages than that of the counterpart. Now, we prove the conclusion from the viewpoint of theoretical analysis.
\begin{thm}\label{T34}
For any given Galois FSR with $n$ stages, there possibly exists an equivalent Fibonacci FSR where the number of stages is less than $n$.
\end{thm}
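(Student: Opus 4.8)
The plan is to prove the statement by explicit construction: the assertion only claims that an $n$-stage Galois FSR \emph{can} admit an equivalent Fibonacci FSR on fewer stages (compare the worked example immediately preceding the theorem, which realizes a $3$-stage Galois FSR by a $2$-stage Fibonacci one), so it suffices to exhibit, for every $n\geq 2$, one $n$-stage Galois FSR together with an equivalent $(n-1)$-stage Fibonacci FSR.

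First I would fix an arbitrary Boolean function $g:\mathscr{D}^{n-1}\to\mathscr{D}$ and consider the $n$-stage Galois FSR
\begin{equation*}
\left\{\begin{aligned}
z_{i}(t+1)&=z_{i+1}(t),\qquad i\in[1,n-2],\\
z_{n-1}(t+1)&=g(z_{1}(t),\ldots,z_{n-1}(t)),\\
z_{n}(t+1)&=0.
\end{aligned}\right.
\end{equation*}
By Lemma \ref{cg1} and \cite{ChengD2010pX} this data determines structure matrices $M_{f_{i}}$ and hence a transition matrix $L_{g}\in\mathscr{L}_{2^{n}\times 2^{n}}$, so it is a bona fide FSR in Galois configuration; it is moreover genuinely non-Fibonacci, since $f_{n-1}=g$ is independent of $z_{n}$ and therefore differs from the function $z_{n}$.

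Next I would exploit that the sub-state $(z_{1},\ldots,z_{n-1})$ evolves autonomously — none of its update functions involves $z_{n}$ — and that this evolution coincides with the dynamics of the $(n-1)$-stage Fibonacci FSR $\mathcal{F}$ with feedback function $g$. Hence, for every initial state $(z_{1}(0),\ldots,z_{n}(0))\in\Delta_{2^{n}}$, the output $z_{1}(t)$ of the Galois FSR equals, for all $t\geq 0$, the output of $\mathcal{F}$ launched from $(z_{1}(0),\ldots,z_{n-1}(0))$; letting the initial state run over $\Delta_{2^{n}}$ shows that the set of output sequences of the Galois FSR is exactly that of $\mathcal{F}$, the coordinate $z_{n}(0)$ merely producing each sequence twice. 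By Definition \ref{T6} the two FSRs are equivalent, and $\mathcal{F}$ has $n-1<n$ stages, which is the claim. (Taking $g\equiv 0$ gives the extreme case $L_{g}=\delta_{2^{n}}[2^{n}~2^{n}~\cdots~2^{n}]$, equivalent to the one-stage Fibonacci FSR $x_{1}(t+1)=0$, both having output set $\{0\,0\,0\cdots,\,1\,0\,0\cdots\}$.)

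I do not expect a genuine obstacle; the two points that need care are that the displayed $L_{g}$ really arises from a well-defined Galois FSR — immediate because every logical matrix in $\mathscr{L}_{2^{n}\times 2^{n}}$ is realizable as a Boolean network \cite{ChengD2010pX} — and that the argument must compare the two \emph{sets} of output sequences rather than merely match them pairwise, which is exactly what the autonomy of $(z_{1},\ldots,z_{n-1})$ guarantees.
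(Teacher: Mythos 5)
Your proof is correct, but it takes a genuinely different route from the paper's. The paper argues through its sequence machinery: it picks a binary sequence whose derived state transition digraph $G_{i}$, $i<n$, has all out-degrees equal to $1$ (so it is realizable by an $i$-stage Fibonacci FSR via Theorem \ref{T19}), and then invokes Lemma \ref{T33} to produce an $n$-stage Galois FSR generating that sequence. You instead give a direct, explicit construction for every $n\geq 2$: pad an $(n-1)$-stage Fibonacci FSR with a decoupled register $z_{n}(t+1)=0$, observe that the resulting $n$-stage system is in Galois (non-Fibonacci) configuration, and verify that the autonomy of $(z_{1},\ldots,z_{n-1})$ forces the two \emph{sets} of output sequences to coincide, which is exactly what Definition \ref{T6} requires. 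Your route is more self-contained and in one respect tighter: the paper's appeal to Lemma \ref{T33} only guarantees that the Galois FSR \emph{generates} the chosen sequence, not that its full output-sequence set equals that of the small Fibonacci FSR, whereas your argument establishes the set equality explicitly. The paper's route, in exchange, stays within the digraph framework that feeds Algorithm \ref{T23}. One minor slip in your parenthetical aside: for $g\equiv 0$ and $n\geq 3$ the transition matrix is not $\delta_{2^{n}}[2^{n}~\cdots~2^{n}]$ (the shift $z_{i}(t+1)=z_{i+1}(t)$ still acts, so e.g.\ the state $(0,1,0)$ yields the output sequence $0~1~0~0~0\cdots$, which the one-stage FSR $x_{1}(t+1)=0$ cannot produce); this does not affect the main argument, which correctly compares against the $(n-1)$-stage Fibonacci FSR rather than a one-stage one.
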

\begin{proof}
To show the conclusion hold, we only construct one binary sequence such that the output degree of every node equals to $1$ in $G_{i}$ with $i< n$. Obviously, there must exists that binary sequence. The Lemma \ref{T33} suffices to show that there exist Galois FSRs with $n$ stages that can generate the binary sequence. Thus, for Galois FSR with $n$ stages, there potentially exists one equivalent Fibonacci FSR where the number of stages is less than $n$.
\end{proof}

\begin{rmk}\label{T27}
It can be learned from \cite{Dubrova2009ITITp5263} that Galois FSR (\ref{T20}) fails to satisfy the uniform conditions, but there still exist the equivalent Fibonacci FSRs. Therefore, compared with \cite{Dubrova2009ITITp5263,Dubrova2010ITITp2961,Liu2013p335}, our method is less conservative to investigate the transformation between these two types of FSRs.
\end{rmk}
Now, for a given Galois FSR with $n$ stages, an algorithm is proposed to find equivalent Fibonacci FSRs with minimal stages.
\begin{lem}
If there does not exist any equivalent Fibonacci FSR with $k$ stages for a given Galois FSR with $n$ stages, then it cannot have an equivalent Fibonacci FSR, the number of whose stages is less than $k$.
\end{lem}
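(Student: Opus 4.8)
The plan is to prove the contrapositive --- if the given Galois FSR admits an equivalent Fibonacci FSR with $j$ stages for some $j<k$, then it admits one with $k$ stages --- and to carry out the whole argument at the level of the derived state transition digraphs. For each $m$, let $G_m$ be the digraph whose nodes are the length-$m$ windows occurring as $m$ consecutive output symbols along the state trajectories of the Galois FSR (the trajectories that exhaust all $2^n$ states), with an edge from $(a_1,\dots,a_m)$ to $(a_2,\dots,a_m,b)$ whenever that window is immediately followed by the symbol $b$ somewhere along a trajectory. The $m$-stage analogue of Theorem~\ref{T19}, proved by exactly the argument given there, says that an equivalent Fibonacci FSR with $m$ stages exists if and only if every node of $G_m$ has out-degree $1$; since every state has a successor, out-degrees are always at least $1$, so ``no equivalent Fibonacci FSR with $m$ stages'' is precisely the statement that $G_m$ has a node of out-degree $\ge 2$. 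Thus the lemma reduces to the purely combinatorial implication: if $G_k$ has a node of out-degree $\ge 2$, then for every $j<k$ the digraph $G_j$ has a node of out-degree $\ge 2$.

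The heart of the matter is a suffix relation between long windows and short ones. Suppose $w=(a_1,\dots,a_k)$ is a node of $G_k$ with out-degree $\ge 2$; unwinding the definition, the block $a_1a_2\cdots a_k$ occurs as $k$ consecutive output symbols in two places, one immediately followed by a symbol $\beta$ and the other by $\beta'\neq\beta$. Fix $j<k$ and let $u=(a_{k-j+1},\dots,a_k)$ be the length-$j$ suffix of $w$. At each of the two occurrences of $w$, the block $u$ also occurs, namely as the last $j$ of those $k$ positions, and it is immediately followed there by exactly the symbol that followed $w$ --- $\beta$ at the first occurrence, $\beta'$ at the second. Hence $u$ is a node of $G_j$ from which edges lead to two distinct nodes, i.e. $u$ has out-degree $\ge 2$. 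Chaining the equivalences: no equivalent Fibonacci FSR with $k$ stages $\Rightarrow$ (Theorem~\ref{T19}) a node of $G_k$ has out-degree $\ge 2$ $\Rightarrow$ (suffix relation) for each $j<k$ a node of $G_j$ has out-degree $\ge 2$ $\Rightarrow$ (Theorem~\ref{T19}) no equivalent Fibonacci FSR with $j$ stages, which is the claim. An immediate byproduct is that the set of admissible stage counts is an upward-closed set $[m,\infty)$, which is what makes the minimal-stage search in the forthcoming algorithm well posed.

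The step I expect to require the most care is not the suffix relation itself but the bookkeeping about where windows occur: one must be sure that an occurrence of $w$ in $G_k$ genuinely produces an occurrence of its suffix $u$, with the correct successor symbol, in $G_j$, even in the boundary cases where the occurrence of $w$ straddles the join between a transient tail and a limit cycle, or wraps around a cycle shorter than $k$. The clean way to dispose of this is to argue directly with the (ultimately periodic) output symbol sequences of the Galois FSR rather than with the finite state trajectories: a node of $G_m$ is then simply an $m$-letter factor of one of those sequences, an edge records the letter following a specified occurrence of that factor, and every position in a sequence has a well-defined next letter. With this reformulation the suffix relation holds verbatim, and nothing beyond index arithmetic is needed.
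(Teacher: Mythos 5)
Your argument is correct, and there is nothing in the paper to compare it against: the lemma is stated there with no proof at all (it appears bare, immediately before Algorithm~2, whose stopping rule silently depends on it). Your route --- reduce both hypotheses to the window-digraph criterion via the $m$-stage version of Theorem~\ref{T19}, then observe that if a $k$-window $w$ occurs with two distinct successor letters $\beta\neq\beta'$, its length-$j$ suffix $u$ occurs at the same two positions with the same two successor letters, so $u$ has out-degree at least $2$ in $G_j$ --- is exactly the missing justification, and your decision to work with the ultimately periodic output sequences themselves (so every occurrence of a window has a well-defined next letter) correctly handles the transient/cycle boundary cases. The one caveat is inherited from the paper rather than introduced by you: you use Theorem~\ref{T19} as a genuine ``if and only if,'' but its ``if'' direction is shaky under Definition~\ref{T6} as literally stated (equality of the \emph{sets} of output sequences), because the Fibonacci FSR built there may have initial states whose output sequences the Galois FSR never produces --- the paper's own system (\ref{T21}) emits $1\,0\,1\,1\,1\cdots$ from state $(1,0)$, which is not an output sequence of (\ref{T20}). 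Under the one-directional notion of equivalence the paper actually applies in its examples (every Galois output sequence is matched by some Fibonacci initial state), your proof goes through without change, and it also yields the upward-closedness of the set of admissible stage counts that makes the minimal-stage search in Algorithm~\ref{T23} well posed.
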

\begin{algorithm}[h!]
\caption{To construct the equivalent Fibonacci FSRs with the minimal stages.}
{\bf Input:} Output sequences (1) and (2).

{\bf Output:} The Fibonacci FSR with the minimal stages.

\begin{algorithmic}[1]
\Procedure{REDUCED STAGES}{}

\State Set $l:=\lceil\text{log}_2(r)\rceil$

\State Construct the derived state transition digraph $G_l$

\If{the output degree of per node is $1$}

\State Construct logical matrix $L_f$ by $G_{l}$

\Return Matrix $L_f$ and the minimal number of stages is $l$

\Else

\State Assign $l:=l+1$

\EndIf
\EndProcedure
\end{algorithmic}\label{T23}
\end{algorithm}
\begin{rmk}\label{T28}
For one Fibonacci FSR with $n$ stages, the period of its state trajectory is the same as that of its corresponding output sequence, and the maximum period of output sequences equals to $2^{n}$. Then, for constructed equivalent Fibonacci FSR, its maximum period should be greater than that of Galois FSR (\ref{T2}). Therefore, in Algorithm \ref{T23}, the minimal stages of the equivalent Fibonacci FSR should be greater than $\text{log}_2(r)$.
\end{rmk}
\begin{example}\label{T24}
Let' s consider the following $3-$stage Galois FSR:
{\footnotesize\begin{equation}\label{T29}
\left\{\begin{aligned}
z_{1}(t+1)=&[z_{1}(t)\wedge \neg(z_{2}(t)\rightarrow z_{3}(t))]\vee (\neg z_{1}(t)\wedge z_{2}(t)),\\
z_{2}(t+1)=&[z_{1}(t)\wedge (z_{2}(t)\leftrightarrow z_{3}(t))]\vee \neg [z_{1}(t)\vee (z_{2}(t)\rightarrow z_{3}(t))],\\
z_{3}(t+1)=&[z_{1}(t)\wedge(z_{2}(t)\vee z_{3}(t))]\vee
\neg(z_{1}(t)\vee z_{3}(t)),\\
\end{aligned}\right.
\end{equation}}
with transition matrix $L_{g}=\delta_{8}[5~3~7~6~4~1~8~7]$. Obviously, system (\ref{T29}) has two cyclic attractors, that are: $\delta_{8}^{7}\rightarrow \delta_{8}^{8}$ and $\delta_{8}^{1}\rightarrow \delta_{8}^{5}\rightarrow\delta_{8}^{4}\rightarrow \delta_{8}^{6}$, then all the state trajectories can be obtained as:
\begin{equation*}
\begin{aligned}
&(i)~\delta_{8}^{2}\rightarrow \delta_{8}^{3}
\rightarrow\delta_{8}^{7}\rightarrow
\delta_{8}^{8}\rightarrow
\delta_{8}^{7}\rightarrow\cdots;\\
&(ii)~\delta_{8}^{1}\rightarrow \delta_{8}^{5}
\rightarrow\delta_{8}^{4}\rightarrow
\delta_{8}^{6}\rightarrow \delta_{8}^{1}\rightarrow\cdots.
\end{aligned}
\end{equation*}
Thus, the corresponding output sequences are:
\begin{equation*}
\begin{aligned}
&(i)~1~1~0~0~0\cdots;\\
&(ii)~1~0~1~0~1\cdots.\\
\end{aligned}
\end{equation*}
Clearly, output sequence $(i)$ is ultimately periodic and its period is $1$, but output sequence $(ii)$ is periodic and its period equals to $2$. According to Algorithm \ref{T23}, one has $l=\lceil\text{log}_2(2)\rceil=1$. While, $G_1$ and $G_2$ fail to satisfy Theorem \ref{T19}, then we consider the case $l=3$. It can be observed from Fig. \ref{Tf2} that equivalent Fibonacci FSRs with $3$ stages can be constructed.
\begin{figure}[H]
\centering
\includegraphics[width=0.3\textwidth]{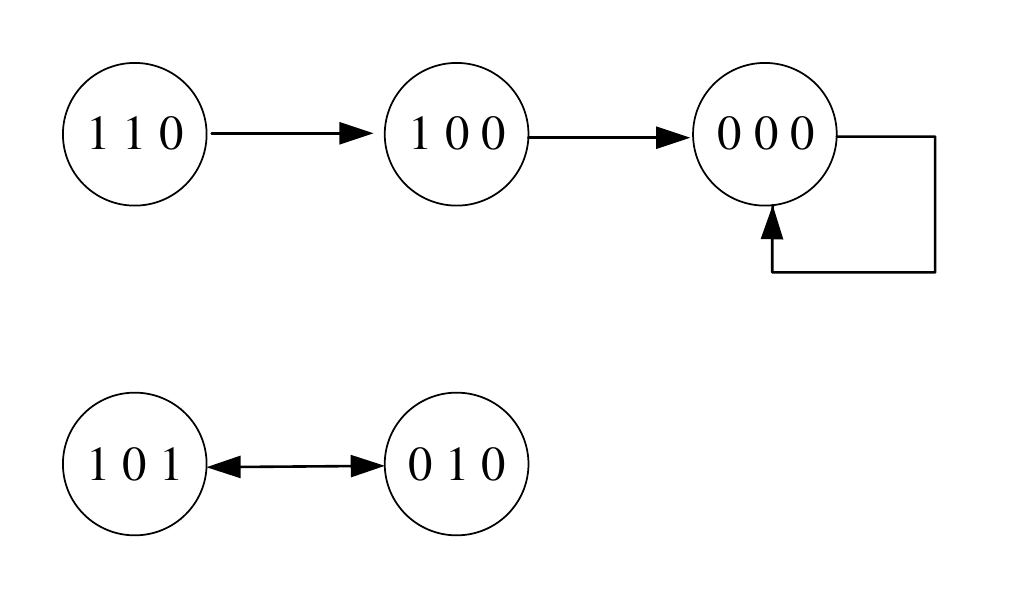}
\caption{The derived state transition digraph $G_{3}$ of Galois FSR (\ref{T29}).}
\label{Tf2}
\end{figure}
Moreover, one can conclude that the transition matrix $L_{f}$ is in the form of
\begin{equation}\label{T31}
L_{f}=\delta_{8}[\ast~4~6~8~\ast~3~\ast~8]s
\end{equation}
and coordinate transformation is $x(t)=T'z(t)$ with
$$T'=\delta_{8}[3~2~4~3~6~6~8~8].$$
Clearly, output sequences $(i)$ and $(ii)$ can be generated by Fibonacci FSRs with transition matrix being (\ref{T31}) when initial states equal to $\delta_{8}^{2}$ and $\delta_{8}^{3}$, respectively. It should be pointed out that in (\ref{T31}), the columns (represented by $\ast$) of $L_{f}$ only need to satisfy equation (\ref{T10}), which implies that there exist eight equivalent Fibonacci FSRs with $3$ stages. For example, Let $L_{f}=\delta_{8}[1~4~6~8~2~3~5~8]$, then the corresponding Fibonacci FSR is:
\begin{equation}\label{T32}
\left\{\begin{aligned}
x_{1}(t+1)=&x_{2}(t),\\
x_{2}(t+1)=&x_{3}(t),\\
x_{3}(t+1)=&[x_{1}(t)\wedge x_{2}(t)\wedge x_{3}(t)]\\&\vee
[\neg x_{1}(t)\wedge (x_{2}(t)\oplus x_{3}(t))].\\
\end{aligned}\right.
\end{equation}
\end{example}
\section{Conclusions}\label{s4}
In this paper, we investigated the transformation between Fibonacci FSRs and Galois FSRs. We regarded these two FSRs as BNs, then under the framework of STP, the corresponding algebraic expressions can be obtained. One approach was proposed to construct equivalent Galois FSRs, and the number of logical operators were optimized. We also provided a criterion to see whether there exists one equivalent Fibonacci FSR. Moreover, one algorithm was given to find the equivalent Fibonacci FSRs with fewer stages. Additionally, it is interesting to find out that for any Fibonacci FSRs, there does not exist any equivalent Galois FSRs with fewer stages.

\section*{References}

\end{document}